\newcommand{\eg}{e.g.\@\xspace}
\newcommand{\etal}{\textit{et~al.\@}\xspace}
\newcommand{\mpibarrier}{\texttt{MPI\_\-Barrier}\xspace}
\newcommand{\mpigather}{\texttt{MPI\_\-Gather}\xspace}
\newcommand{\mpigatherv}{\texttt{MPI\_\-Gatherv}\xspace}
\newcommand{\mpiscatter}{\texttt{MPI\_\-Scatter}\xspace}
\newcommand{\mpiscatterv}{\texttt{MPI\_\-Scatterv}\xspace}
\newcommand{\mpiallreduce}{\texttt{MPI\_\-Allreduce}\xspace}
\newcommand{\mpiint}{\texttt{MPI\_\-INT}\xspace}
\newcommand{\tuwgatherv}{\texttt{TUW\_\-Gatherv}\xspace}
\newcommand{\openmpi}{Open\,MPI\xspace}
\newcommand{\necmpi}{NEC\,MPI\xspace}
\newcommand{\mvapich}{MVAPICH\xspace}
\newcommand{\intelmpi}{Intel\,MPI\xspace}
\newcommand{\jupiteropenmpitwoone}{OpenMPI-2.0.1\xspace}
\newcommand{\jupitermvapichtwotwo}{MVAPICH2-2.2\xspace}
\newcommand{\jupiternecmpi}{NEC\,MPI-1.3.1\xspace}
\newcommand{\vscintelmpi}{Intel\,MPI 2017.2\xspace}
\newcommand{\gccversion}{gcc 4.9.2\xspace}
\newcommand{\intelmpiadjust}{\texttt{I\_\-MPI\_\-ADJUST}}
\newcommand{\guidelt}{\ensuremath{\preceq}\xspace}
\newcommand{\ceiling}[1]{\lceil #1\rceil}
\newcommand{\floor}[1]{\lfloor #1\rfloor}
\newtheorem{lemma}{Lemma}
\newtheorem{theorem}{Theorem}
\newenvironment{proof}{\emph{Proof:}}{$\Box$\newline}
\title{Practical, Linear-time, Fully Distributed Algorithms for
  Irregular Gather and Scatter\thanks{This work was in part supported
    by the Austrian FWF project ``Verifying self-consistent MPI
    performance guidelines'' (P25530).  The computational results
    presented have been achieved in part using the Vienna Scientific
    Cluster (VSC).}} 
\author{Jesper Larsson Tr\"aff\\
TU Wien\\
Faculty of Informatics, Institute of Information Systems\\ 
Research Group Parallel Computing\\ 
Favoritenstrasse 16/184-5, 1040 Wien, Austria}
\begin{document}
\maketitle

\begin{abstract}
We present new, simple, fully distributed, practical algorithms with
linear time communication cost for irregular gather and scatter
operations in which processors contribute or consume possibly
different amounts of data. In a linear cost transmission model with
start-up latency $\alpha$ and cost per unit $\beta$, the new
algorithms take time $3\ceiling{\log_2 p}\alpha+\beta \sum_{i\neq
  r}m_i$ where $p$ is the number of processors, $m_i$ the amount of
data for processor $i, 0\leq i<p$, and processor $r, 0\leq r<p$ a root
processor determined by the algorithm. For a fixed, externally given
root processor $r$, there is an additive penalty of at most
$\beta(M_{d'}-m_{r_{d'}}-\sum_{0\leq j<d'}M_j)$ time steps where each
$M_j$ is the total amount of data in a tree of $2^j$ different
processors with roots $r_j$ as constructed by the algorithm.  The
worst-case penalty is less than $\beta \sum_{i\neq r}m_i$ time steps.
The algorithms have attractive properties for implementing the
operations for MPI (the Message-Passing Interface). Standard
algorithms using fixed trees take time either $\ceiling{\log_2
  p}(\alpha+\beta \sum_{i\neq r} m_i)$ in the worst case, or
$\sum_{i\neq r}(\alpha+\beta m_i)$. We have used the new algorithms to
give prototype implementations for the \mpigatherv and \mpiscatterv
collectives of MPI, and present benchmark results from a small and a
medium-large InfiniBand cluster. In order to structure the
experimental evaluation we formulate new performance guidelines for
irregular collectives that can be used to assess the performance in
relation to the corresponding regular collectives. We show that the
new algorithms can fulfill these performance expectations with a large
margin, and that standard implementations do not.
\end{abstract}

\section{Introduction}

Gather and scatter operations are important collective operations for
collecting and distributing data among processors in a parallel system
with some chosen (and known) root processor, \eg, row-column
gather-scatter in linear algebra algorithms. The problems come in two
flavors, namely a \emph{regular} (or homogeneous) variant in which all
processors contribute or consume blocks of the same size, and an
\emph{irregular} (on inhomogeneous) variant in which the blocks may
have different sizes. For the irregular variant, the root processor may
or may not know the sizes of the blocks of data to be distributed to
or collected from the non-root processors.  While good algorithms and
implementations exist for different types of systems for the regular
problems, the irregular problems have been much less studied and often
only trivial algorithms with less than optimal performance (for small
to medium block sizes) are implemented. In this paper, we present new,
simple algorithms for the irregular gather and scatter problems with
many desirable properties for the practical implementation, and show
experimentally with implementations for and in MPI~\cite{MPI-3.0} that
they can perform much better and much more consistently than common
algorithms and implementations.

Gather and scatter operations are included as collective operations in
MPI in both variants~\cite[Chapter 5]{MPI-3.0}. For the regular
operations \mpigather and \mpiscatter, usually fixed (binomial) trees
are used (hierarchically) for short to medium sized blocks, while
large blocks are sent or received directly from or to the root. Since
the common block size is known, the MPI processes can consistently and
without any extra communication decide which algorithm to use.
Standard algorithms are surveyed by Chan
\etal~\cite{ChanHeimlichPurkayasthavandeGeijn07}, and analyzed under a
linear transmission cost model where they lead to optimal, linear
bandwidth, and optimal number of communication rounds (binomial
trees). Similar results for different communication networks were
presented early by Saad and Schulz~\cite{SaadSchultz89}. For the
irregular \mpigatherv and \mpiscatterv operations where only the root
process has full information on the sizes of the blocks contributed by
the other, non-root processes, the situation is different. Fixed
(oblivious) trees of logarithmic depth may lead to a large block being
sent a logarithmic number of times, and letting the non-root processes
send or receive directly from the root entails a linear number of
communication start-ups which might be too expensive when the non-root
blocks are small. Current MPI libraries, nevertheless, seem to use
variations of these algorithms. Tr\"aff~\cite{Traff04:gatscat} gave
algorithms specifically for MPI that rely on the global information on
block sizes available at the root process and use sorting to construct
good trees. These algorithms may therefore be too expensive when
non-root blocks are small. Variants of these algorithms were discussed
and benchmarked by Dichev
\etal~\cite{DichevRychovLastovetsky10}. Regular gather-scatter
problems for heterogeneous multiprocessors where communication links
may have different capabilities have been studied in several papers,
\eg, \cite{Ben-MiledFortesEigenmannTaylor98,HattaShibusa00}. These
algorithms also mostly rely on global knowledge (by one process) and
sorting by the transmission times between processes to construct good
communication schedules, but could be adopted to irregular
gather-scatter problems. Boxer and Miller~\cite{BoxerMiller04} study
the regular gather-scatter problems on the coarse grained
multiprocessor and concentrate on the problem of finding good spanning
trees for the machine in case. For hypercubes, compound scatter-gather
computations are studied more precisely by Charles and
Fraigniaud~\cite{CharlesFraigniaud93} who derive pipelined schedules
for the regular gather and scatter problems. Simple algorithms for the
regular problems in an asynchronous communication model that accounts
for delays and permits overlap were presented
in~\cite{ShibusawaMakinoNimiyaHatta00}.  Bhatt \etal
\cite{BhattPucciRanadeRosenberg93} study the irregular gather and
scatter problems in tree networks, and derive (nearly) optimal
schedules for arbitrary trees. This situation is somewhat orthogonal
to the usual objective of finding both a good spanning tree and a
corresponding schedule. The algorithms require full knowledge of the
message sequences to be scattered and gathered.

In the following we present new, simple algorithms for the irregular
gather and scatter problems with a number of desirable properties. For
the analysis, we assume a fully connected network with 1-ported,
bidirectional (telephone-like) communication.  We
let $p$ denote the number of processors which are numbered
consecutively from $0$ to $p-1$. For simplicity, we assume that the
cost of transmitting a message of $m$ units between any two processors
is linear and modeled as $\alpha+\beta m$, where $\alpha$ is a
communication start-up latency, and $\beta$ the transmission time per
unit. A processor involved in communication can start the next
transmission as soon as it has finished and selects from which other
processor to receive the next message.  In the gather and scatter
problems, each processor $i,0\leq i<p$ has a block of data of size
$m_i$ with $m_i\geq 0$ that it either wants to contribute to (gather)
or consume from (scatter) some root processor $r,0\leq r<p$. The root
$r$ is usually a given processor, and this $r$ is known to all other
processors. At the root, blocks are stored in processor order, that is
$m_0,m_1,m_2,\ldots, m_{p-1}$ (we assume that the root also has a
block $m_r$ which does not have to be transmitted). Any consecutive
sequence of blocks can be sent or received together as a single
message. Our algorithms do not assume that the root knows the size of
all $p$ data blocks, although the \mpigatherv and \mpiscatterv
operations do make this assumption and require this to be the case.

Our algorithms construct spanning trees of logarithmic depth, and need
only the optimal $\ceiling{\log_2 p}$ number of communication rounds
for the tree construction, each round consisting of at most two
communication steps. For the gathering or scattering of the data
blocks, another at most $\ceiling{\log_2 p}$ communication rounds are
needed (we present some practical improvements for large block
sizes). Trees are constructed in a distributed manner, with each
processor working only from gradually accumulated information, with no
dependence on global information (\eg, from the root) on the sizes of
all other data blocks.  The time for the root to gather or scatter all
data blocks from or to the non-root processors is linear, namely
$\ceiling{\log_2 p}\alpha+\beta \sum_{0\leq i<p, i\neq r} m_i$, with
an additive penalty of at most $\beta(M_{d'}-m_{r_{d'}}-\sum_{0\leq
  j<d'}M_j)$ time steps where each $M_j$ is the total amount of data in
a tree of $2^j$ different processors as constructed by the algorithm
for the case when the root is a fixed, externally given process (as in
\mpigatherv and \mpiscatterv). The worst-case penalty is less than
$\beta \sum_{i\neq r}m_i$ time steps. In contrast, for any fixed,
block-size oblivious binomial tree it is easy to construct a worst
case taking $\ceiling{\log_2 p}(\alpha+\beta \sum_{0\leq i<p, i\neq r}
m_i)$ time steps, namely by choosing $m_i=0$ for all processors except
one being farthest away for the root.  At all processors, blocks are
always sent and received in order: Any receive operation receives a
message consisting of blocks $m_k,m_{k+1},\ldots,m_{k+l}$. No,
potentially costly, local reordering of blocks in message buffers is
therefore necessary.

We have implemented our algorithms\footnote{The prototype
  implementations used here for evaluation are available.} to support
the \mpigatherv and \mpiscatterv operations, and evaluated them with
different block size distributions on a small InfiniBand cluster under
three different MPI libraries, and a medium-large InfiniBand cluster
under the vendor (Intel) MPI library. In order to structure the
comparison against the native MPI library implementations we formulate
expectations on the relative performance as new, self-consistent
performance guidelines~\cite{Traff16:autoguide,Traff10:selfcons}. We
can show that the new algorithms can in many situations significantly
outperform the native MPI library, and overall much better fulfill the
formalized performance expectations.

\section{Problem and algorithm}

We now present the algorithm for the irregular gather problem; the
scatter algorithm is analogous. Each of the $p$ processors has a data
block of $m_i$ units that it needs to contribute to some root
process $r, 0\leq r<p$. We organize the $p$ processes in a
$\ceiling{\log_2 p}$-dimensional (incomplete), ordered hypercube which
we use as a design vehicle, but communication can be between
processors that are not adjacent in the hypercube. We let $H_d, 0\leq
d\leq\ceiling{\log_2 p}$ denote a $d$-dimensional (incomplete)
hypercube consisting of (at most) $2^d$ processors. We say that the
hypercube $H_d$ is \emph{ordered} if the processors belonging to $H_d$
form a consecutive range
$[a2^d,\ldots,a2^d+2^d-1]=[a2^d,\ldots,(a+1)2^d-1]$ for
$a\in\{0,\ldots, \ceiling{p/2^d}-1\}$. The ordered hypercube $H_{d+1}$
consisting of processors $[a2^{d+1},\ldots,(a+1)2^{d+1}-1]$ is built
from two \emph{adjacent}, ordered hypercubes $H_d$ with processors
$[2a2^d,\ldots,(2a+1)2^d-1]$ and $[(2a+1)2^d,\ldots,(2a+2)2^d-1]$. If
$p$ is not a power of two, the last $H_d$ hypercube consists of the
processors $[(\ceiling{p/2^d}-1)2^d,\ldots,p-1]$.

By an \emph{ordered hypercube gather algorithm} for $H_d$ we mean
an algorithm for $H_d$ in which a processor in one of the subcubes
$H_{d-1}$ which has gathered all data from the processors of this subcube
sends all its data to a processor in the other subcube $H_{d-1}$ which
similarly has already gathered all data from that subcube. This
processor will now have gathered all data in the hypercube $H_d$
and will become the root processor of $H_d$.  Note that this may
require communicating along edges that do not belong to the hypercube,
but of course do belong to the fully connected network.

\begin{lemma}
\label{lem:existence}
For any $H_d$, there exists an ordered hypercube gather algorithm
that gathers the data to some root processor $r$ in $H_d$ in $d\alpha
+\beta \sum_{i\in H_d,i\neq r} m_i$ time units.
\end{lemma}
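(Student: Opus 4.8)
The plan is to argue by induction on the dimension $d$. The base case $d=0$ is immediate: $H_0$ consists of a single processor, which is its own root $r$, no data needs to move, and the claimed bound $0\cdot\alpha+\beta\sum_{i\in H_0,i\neq r}m_i=0$ is attained.

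For the inductive step I would use the recursive structure of the ordered hypercube directly. By construction $H_d$ is formed from two adjacent ordered subcubes $H^{(1)}_{d-1}$ and $H^{(2)}_{d-1}$, each a consecutive range of processors. Applying the induction hypothesis to each subcube yields gather algorithms that collect all data of subcube $j$ to some root $r_j$ in time $(d-1)\alpha+\beta(S_j-m_{r_j})$, where $S_j=\sum_{i\in H^{(j)}_{d-1}}m_i$ is the total data in that subcube. Since the two subcubes are disjoint processor sets, these two gathers run concurrently without contention and finish at times $T_j=(d-1)\alpha+\beta(S_j-m_{r_j})$. To complete the gather for $H_d$, one of the two roots sends its entire accumulated (consecutive) block range to the other, and the receiver becomes the root $r$ of $H_d$.

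The crux is choosing the direction of this final transmission. I would let the subcube whose gather finishes later --- equivalently, the one with the larger value of $S_j-m_{r_j}$ --- supply the receiver, and have the root of the earlier-finishing subcube do the sending. If, say, subcube $2$ finishes no earlier than subcube $1$, then $r_1$ sends its $S_1$ units to $r_2=r$; the send cannot begin until both endpoints are done, i.e. at time $\max(T_1,T_2)=T_2$, and it costs $\alpha+\beta S_1$. The total is therefore $(d-1)\alpha+\beta(S_2-m_{r_2})+\alpha+\beta S_1=d\alpha+\beta(S_1+S_2-m_r)=d\alpha+\beta\sum_{i\in H_d,i\neq r}m_i$, exactly the claimed bound. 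This orientation is essential, and getting it right is the main point of the argument: the $\max$ over the two concurrent gathers collapses into the additive bound only when the receiver is the later finisher. Sending in the opposite direction would transmit the slower subcube's $S_2$ units starting only after its own (larger) finishing time, giving a $\beta$-coefficient of $2S_2-m_{r_2}$, which is at least the target $S_1+S_2-m_{r_1}$ precisely because $S_2-m_{r_2}\geq S_1-m_{r_1}$ --- hence no improvement and generally a strict loss.

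Two details remain routine. First, the in-order block property is preserved: each subcube root holds the full consecutive block range of its subcube, so the transmitted message is a single consecutive range and the receiver ends up holding the union of two adjacent ranges, again one consecutive range covering all of $H_d$; no local reordering is needed regardless of which root is the receiver. Second, when $p$ is not a power of two the last subcube may be incomplete or empty, but an empty subcube contributes no data and costs nothing, so the recursion simply produces fewer rounds and the stated expression remains a valid upper bound. I do not expect either point to cause difficulty; the single substantive step is the correct orientation of the merging transmission in the inductive step.
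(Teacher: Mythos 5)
Your proof is correct and follows essentially the same route as the paper: induction on $d$ over the two adjacent ordered subcubes, with the key step of having the earlier-finishing subcube's root send its accumulated data to the later-finishing root, so that the $\max$ of the two concurrent gather times absorbs into the additive bound $d\alpha+\beta\sum_{i\in H_d,i\neq r}m_i$. Your added remarks on the wrong orientation, block ordering, and the non-power-of-two case match the paper's surrounding discussion and introduce no gaps.
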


\begin{proof}
The claim follows by induction on $d$. For $H_0$ the sole processor
$r\in H_0$ already has the data $m_0$ and there is no further
cost. Let $H'_{d-1}$ and $H''_{d-1}$ be the two subcubes of $H_d$. By
the induction hypothesis there is a processor $r'$ of $H'_{d-1}$ that
has gathered all data of $H'_{d-1}$ in $t'=(d-1)\alpha +\beta
\sum_{i\in H'_{d-1},i\neq r'} m_i$ time steps, and a processor $r''$
that has gathered all data of $H''_{d-1}$ in $t''=(d-1)\alpha +\beta
\sum_{i\in H''_{d-1},i\neq r''} m_i$ time steps. Of the two root
processors $r'$ and $r''$, the one with the smaller gather time (with
ties broken in favor of the hypercube with the smallest amount of
data) sends its data to the other root processor. Say, $r'$ is the
root with $t'\leq t''$. Processor $r'$ sends a message of $\sum_{i\in
  H'_{d-1}}m_i$ units to root $r''$ which takes $\alpha +
\beta\sum_{i\in H'_{d-1}}m_i$ time steps. Adding to the time $t''$
already taken by the slower $r''$ to gather the data from $H''_{d-1}$ gives
$(d-1)\alpha +\beta \sum_{i\in H''_{d-1},i\neq r''} m_i+\alpha +
\beta\sum_{i\in H'_{d-1}}m_i = d\alpha +\beta \sum_{i\in H_d,i\neq r}
m_i$ as claimed. The root $r''$ of $H''_{d-1}$ becomes the root $r$ of
$H_d$.
\end{proof}

Since roots with smaller gather times sends to roots with larger
gather times, communication can readily take place with no delay for
the sending gather root processor to become ready. Since subcubes are
ordered, the data blocks received at a new root can easily be kept in
consecutive order. Note that for the gather times of the two roots
$r'$ and $r''$, $t'\leq t''$ if and only if $\sum_{i\in H'_{d-1},i\neq
  r'} m_i\leq\sum_{i\in H''_{d-1},i\neq r''} m_i$, so that the shape
of the constructed gather tree depends only on the block sizes and not
on the relative magnitudes of $\alpha$ and $\beta$. For each $H_d$
hypercube with root $r$, $\sum_{i\in H_{d-1},i\neq r'}m_i$ is an estimate
of the time to construct $H_d$.

\begin{lemma}
For any arbitrarily given root processor $r\in H_d$, there is an
ordered hypercube algorithm that gathers all data in $H_d$ to $r$ in
$d\alpha +\beta \sum_{i\in H_d,i\neq r} m_i$ time units with an
additive penalty of at most $\beta(M_d'-m_{r_{d'}}-\sum_{0\leq j<d'}M_j)$
time steps for some $d',d'<d$.  The root processor gathers data from
the roots in a sequence of ordered hypercubes
$H_0,H_1,\ldots,H_{d-1}$, each with a total amount of data $M_j$, and
$d'$ is the last such hypercube for which waiting time is incurred.
\end{lemma}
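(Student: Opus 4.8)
The plan is to exhibit one concrete algorithm and then analyze its completion time by tracking the idle time forced on the fixed root $r$. First I would pin down the algorithm's shape. Since $r$ must end up holding everything, $r$ has to be the root of every cube in the nested chain $G_0\subset G_1\subset\cdots\subset G_d=H_d$ that contains it, where $G_j$ has dimension $j$ and $G_0=\{r\}$. Passing from $G_j$ to $G_{j+1}$ adjoins an ordered dimension-$j$ subcube, the $j$-th adjacent cube $H_j$ with total data $M_j$, whose own root $r_j$ must ship everything it has gathered to $r$. These adjacent cubes are pairwise disjoint and partition $H_d\setminus\{r\}$, so I would run the free algorithm of Lemma~\ref{lem:existence} inside each of them \emph{concurrently}: by that lemma the data of $H_j$ is all present at $r_j$ at the absolute time $\mathrm{ready}_j=j\alpha+\beta(M_j-m_{r_j})$. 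The only contended resource is $r$ itself, which, being $1$-ported, must absorb the $d$ messages from $r_0,r_1,\ldots,r_{d-1}$ one at a time; I would let it do so in dimension order.

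Second, I would set up the completion-time recurrence. Letting $T_j$ be the time at which $r$ finishes absorbing $H_j$, the $1$-ported constraint at $r$ together with the readiness times gives $T_j=\max(T_{j-1},\mathrm{ready}_j)+\alpha+\beta M_j$ with $T_{-1}=0$. Rewriting as $T_j=T_{j-1}+\mathrm{idle}_j+\alpha+\beta M_j$, where $\mathrm{idle}_j=\max(0,\mathrm{ready}_j-T_{j-1})$, and unrolling yields $T_{d-1}=d\alpha+\beta\sum_{j=0}^{d-1}M_j+\sum_{j=0}^{d-1}\mathrm{idle}_j$. Since $\sum_j M_j=\sum_{i\in H_d,i\neq r}m_i$, the first two terms are exactly the optimal time of Lemma~\ref{lem:existence}, so the entire additive penalty is the accumulated idle time $\sum_j\mathrm{idle}_j$ of the root. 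This reframing of the penalty as pure root idle time is the conceptual heart of the argument.

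Third, I would collapse that sum of idle times into the single claimed term. Substituting the explicit values of $\mathrm{ready}_j$ and of $T_{j-1}=j\alpha+\beta\sum_{k<j}M_k+P_{j-1}$, where $P_j=\sum_{k\leq j}\mathrm{idle}_k$ is the cumulative idle, shows $\mathrm{idle}_j=\max\bigl(0,\beta(M_j-m_{r_j}-\sum_{k<j}M_k)-P_{j-1}\bigr)$. This telescopes: $P_j=\max\bigl(P_{j-1},\beta(M_j-m_{r_j}-\sum_{k<j}M_k)\bigr)$, so the total penalty $P_{d-1}$ equals $\max_{0\leq j<d}\beta(M_j-m_{r_j}-\sum_{k<j}M_k)$. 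Taking $d'$ to be the last index at which $\mathrm{idle}_{d'}>0$ (the last round in which $r$ actually waits; if there is no such round the penalty is $0$ and one may take $d'=0$), the maximum is attained at $d'$ and every later round leaves $P$ unchanged, giving penalty $=\beta(M_{d'}-m_{r_{d'}}-\sum_{0\leq j<d'}M_j)$ with $d'\leq d-1<d$, as claimed. The step I expect to be most delicate is this telescoping — specifically, subtracting the already-accumulated idle $P_{j-1}$ correctly so that the survivor is a single maximum rather than a running sum, and then arguing that this maximum is realized precisely at the last waiting round.

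Finally, for the worst-case remark I would bound $\beta(M_{d'}-m_{r_{d'}}-\sum_{j<d'}M_j)\leq\beta(M_{d'}-m_{r_{d'}})\leq\beta\sum_{i\neq r}m_i$, observing that the tie-breaking rule of Lemma~\ref{lem:existence} (the lighter subcube sends) forces $r_{d'}$ to retain positive data whenever $H_{d'}$ carries any, which renders the bound strict as soon as there is data to move.
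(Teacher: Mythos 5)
Your proposal is correct and takes essentially the same route as the paper: the same decomposition of $H_d\setminus\{r\}$ into the chain of adjacent subcubes $H_0,\ldots,H_{d-1}$, each gathered concurrently to its root $r_j$ by the free algorithm of Lemma~\ref{lem:existence}, with $r$ absorbing the $r_j$ in dimension order. Your idle-time recurrence $P_j=\max\bigl(P_{j-1},\,\beta(M_j-m_{r_j}-\sum_{k<j}M_k)\bigr)$ simply makes explicit what the paper's proof asserts tersely --- that earlier delays are subsumed and the total penalty equals the per-round bound at the last waiting round $d'$ --- so it is a more rigorous rendering of the same argument rather than a different one.
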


\begin{proof}
The construction of Lemma~\ref{lem:existence} is modified such that
data are always sent to processor $r$ if either $r'=r$ or $r''=r$. The
given root processor $r$ will therefore receive blocks from $d-1$
linear gather time subcubes $H_0, H_1, \ldots, H_{d-1}$. The amount of
data, and the time needed to gather the data in these $d$ hypercubes
is unrelated and may differ. Let $M_j=\sum_{i\in H_j}m_i$ be the
amount of data in hypercube $H_j$ with root processor $r_j$.  If the
time needed to gather the data in some $H_{d'}$ to $r_{d'}$, namely
$\alpha d'+\beta(M_{d'}-m_{r_{d'}})$, is larger than the time needed
to gather the data from the previous hypercubes $H_0,H_1,\ldots
H_{d'-1}$ to $r$, the root processor is delayed until the data gather
in $H_{d'}$ has completed. This delay is at most $\alpha
d'+\beta(M_{d'}-m_{r_{d'}})-(\alpha
d'+\beta(\sum_{j<d'}M_j)=\beta(M_{d'}-m_{r_{d'}}-\sum_{j<d'}M_j)$. Let
$d'$ be the last hypercube in the sequence incurring such a delay. The
total time to gather all data to the root $r$ is therefore $d\alpha
+\beta \sum_{i\in H_d,i\neq r} m_i$ plus the penalty of
$\beta(M_{d'}-m_{r_{d'}}-\sum_{j<d'}M_j)$.
\end{proof}

The resulting construction is easy to implement, and better than first
gathering to the linear time root determined by
Lemma~\ref{lem:existence} and then sending to the externally given
root $r$ which would incur an extra communication round and sending
the complete data $\sum_{i\in H_d}m_i$, effectively loosing half of
the communication bandwidth (although still being linear).

\begin{figure}
\begin{center}
\begin{tikzpicture}
\draw (0.25,0.25) rectangle(3.25,3.25); 
\draw (3.5,0.25) rectangle(6.5,3.25);
\draw (0,0) rectangle(6.75,6.75);
\draw (0.75,2) node {$H'_d$};
\draw (4,2) node {$H''_d$};
\draw (0.75,4.25) node {$H_{d+1}$};
\draw [->] (3,4.25) -- (6.25,4.25) node[pos=0.5,above]{$(\sum_{i\neq r'\in H'_d} m_i,m_{r'},r')$};
\draw [<-] (3,4) -- (6.25,4) node[pos=0.5,below]{$(\sum_{i\neq r''\in H''_d} m_i,m_{r''},r'')$};
\draw [->] (3,6) -- (0.75,6) node[left] {$r'$};
\draw [->] (6.25,6) -- (4.5,6) node[left] {$r''$};
\end{tikzpicture}
\end{center}
\caption{An iteration of the algorithm of Lemma~\ref{lem:construction}
  showing the communication necessary to join two adjacent, ordered
  hypercubes $H'_d$ and $H''_d$ into the larger hypercube
  $H_{d+1}$. The fixed roots first exchange information on the gather
  times, the size of the root data blocks, and the identity of the
  gather roots in the respective subcubes. In the next step, the
  gather roots $r'$ and $r''$ receive this information from their
  fixed roots, so that they can consistently determine which will be
  the gather root for $H_{d+1}$.}
\label{fig:algorithm}
\end{figure}
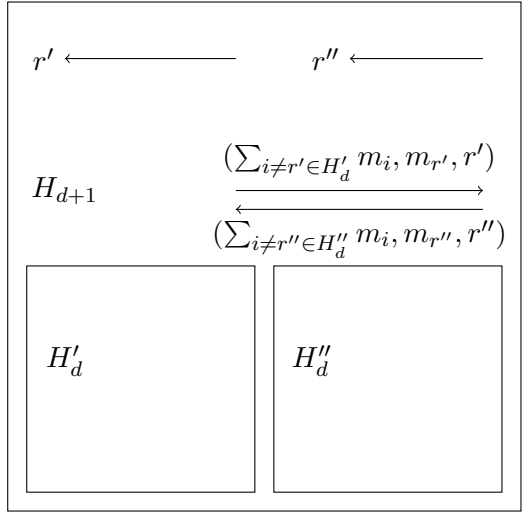

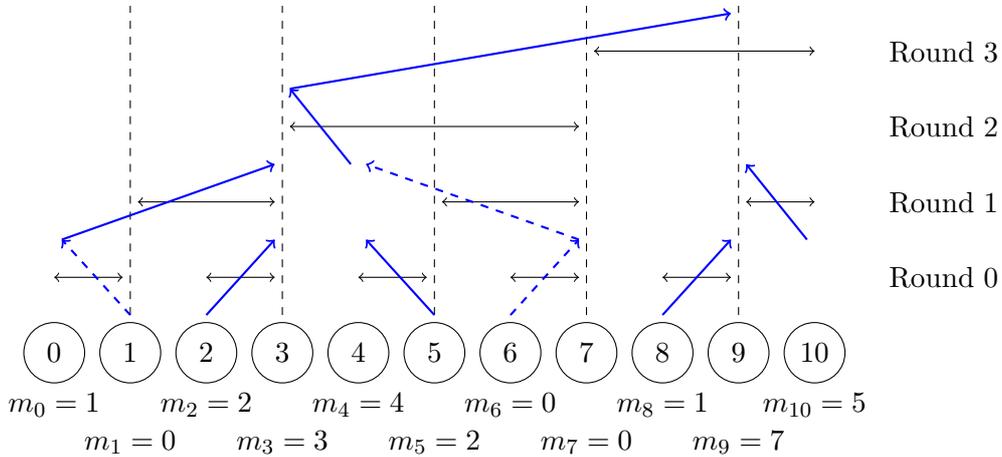
\begin{figure}
\begin{center}
\begin{tikzpicture}
\foreach \y in {0,1,2,3} 
  \draw (11.7,\y+1) node {Round $\y$};
\foreach \x in {0,...,10} 
  \draw (\x,0) circle (0.4) node {$\x$};
\foreach \x / \m in {0/1,2/2,4/4,6/0,8/1,10/5} 
  \draw (\x,-0.4) node[below] {$m_{\x}=\m$};
\foreach \x / \m in {1/0,3/3,5/2,7/0,9/7} 
  \draw (\x,-0.9) node[below] {$m_{\x}=\m$};
\foreach \x in {0,2,...,8} 
  \draw [<->] (\x,1) -- (\x+0.9,1);
\foreach \x in {1,5,...,10} 
  \draw [<->] (\x+0.1,2) -- ({min(\x+1.9,10)},2);
\foreach \x in {3} 
  \draw [<->] (\x+0.1,3) -- ({min(\x+3.9,10)},3);
\draw [<->] (7.1,4) -- (10,4);
\foreach \x in {1,3,...,9} 
  \draw [dashed,thin] (\x,0.5) -- (\x,4.6);
\draw [->,blue,thick,dashed] (1,0.5) -- (0.1,1.5);
\draw [->,blue,thick] (2,0.5) -- (2.9,1.5);
\draw [->,blue,thick] (5,0.5) -- (4.1,1.5);
\draw [->,blue,thick,dashed] (6,0.5) -- (6.9,1.5);
\draw [->,blue,thick] (8,0.5) -- (8.9,1.5);
\draw [->,blue,thick] (0.1,1.5) -- (2.9,2.5);
\draw [->,blue,thick,dashed] (6.9,1.5) -- (4.1,2.5);
\draw [->,blue,thick] (9.9,1.5) -- (9.1,2.5);
\draw [->,blue,thick] (3.9,2.5) -- (3.1,3.5);
\draw [->,blue,thick] (3.1,3.5) -- (8.9,4.5);
\end{tikzpicture}
\end{center}
\caption{A linear-time, ordered gather tree for $p=11$ processors and
  root $9$ with the indicated block sizes $m_i$ as constructed by the
  algorithm of Lemma~\ref{lem:construction}. Thick (blue) arrows are the
  gather tree edges with dotted arrows indicating data sizes of zero
  with no actual communication. Thin arrows indicate the exchange
  between fixed roots as needed to construct the ordered gather tree.}
\label{fig:example}
\end{figure}

The communication structure of an ordered hypercube gather algorithm
is a binomial tree with a particular numbering of the tree roots
determined by the $p$ data block sizes $m_i, 0\leq i<p$. An example is
shown in Figure~\ref{fig:example}. This tree can
be constructed efficiently as shown by the next lemma which is 
illustrated in Figure~\ref{fig:algorithm}.

\begin{lemma}
\label{lem:construction}
For any $H_d$, the gather communication tree can be constructed in $d$
communication rounds, each comprising at most two send and receive
operations.
\end{lemma}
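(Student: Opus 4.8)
The plan is to proceed by induction on $d$, building the gather tree of Lemma~\ref{lem:existence} level by level along a fixed, block-size-oblivious \emph{control} tree that also carries the small amount of metadata needed to fix the (block-size dependent) gather edges. I would designate as the \emph{fixed root} of each ordered subcube $H_j$ spanning $[a2^j,\ldots,\min((a+1)2^j-1,p-1)]$ its highest-numbered processor $\min((a+1)2^j-1,p-1)$. Since the fixed root of $H_j$ coincides with the fixed root of its upper subcube $H''_{j-1}$, the fixed roots trace exactly the edges of a standard binomial tree. The invariant I would maintain is that, as soon as $H_j$ has been constructed, its fixed root knows the triple $(\sum_{i\in H_j,i\neq r}m_i,\,m_r,\,r)$ for the gather root $r$ of $H_j$; by the remark following Lemma~\ref{lem:existence} the first component alone determines the gather-time comparison, while its sum with the second gives the total subcube data $\sum_{i\in H_j}m_i$ that the losing root later transmits.

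For the base case $d=0$ the single processor is trivially its own gather root and no round is needed; for $d=1$ the two processors of $H_1$ exchange their block sizes in one step, both thereby learn which has the larger block and becomes the gather root (ties broken as in Lemma~\ref{lem:existence}), and the invariant holds at the fixed root. For the inductive step I would first invoke the hypothesis to construct $H'_{d-1}$ and $H''_{d-1}$ in parallel in $d-1$ rounds, after which their two fixed roots each hold their metadata triple. Round $d$ then comprises at most two communication steps, exactly as in Figure~\ref{fig:algorithm}: in the first step the two fixed roots exchange their triples, so that each can locally and consistently decide which gather root sends to the other and thus fix the gather edge of $H_d$; in the second step each fixed root forwards the combined decision directly---one hop, since the network is fully connected---to the gather root of its own subcube, so that both gather roots learn their send/receive role for the later data-movement phase. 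Because the fixed root of $H_d$ is the fixed root of $H''_{d-1}$, after the exchange it combines the two triples to recompute $(\sum_{i\in H_d,i\neq r}m_i,m_r,r)$ for $H_d$, re-establishing the invariant at level $d$.

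Summing gives $(d-1)+1=d$ rounds. The step I expect to demand the most care is not any single inequality but the bookkeeping that \emph{no} processor performs more than two communication operations in a round once all subcubes at a level run simultaneously. The facts to verify are that each processor lies in exactly one subcube per level and hence plays exactly one role there: a fixed root performs one exchange and one forward (two steps), a gather root distinct from the fixed root performs one receive, a processor that is both needs no forward, and all remaining processors are idle. I would also check that the cross-round dependency is benign, since the exchange a fixed root performs in round $d$ uses only the triple it has already finished assembling by the end of round $d-1$; there is thus no conflict between a processor's forwarding in one round and its exchange in the next.
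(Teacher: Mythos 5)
Your proposal is correct and follows essentially the same route as the paper's own proof: an induction over the hypercube dimension in which predetermined fixed roots exchange the constant-size triple (gather-time estimate, root block size, gather-root identity) and then forward the decision one hop to their gather roots, exactly as in Figure~\ref{fig:algorithm}. Your explicit per-round accounting of roles (fixed root, gather root, both, idle) and the observation that no forward is needed when the two roles coincide (in particular in the first round) match the paper's count of at most two communication steps per round and $2\ceiling{\log_2 p}-1$ dependent operations overall.
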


\begin{proof}
The communication tree is constructed iteratively, maintaining the
following invariant. Each $H_d$ has a predetermined, fixed root that
can readily be computed by any processor, and a
gather root $r$ which will gather the data from $H_d$ as per
Lemma~\ref{lem:existence}. The fixed root and the gather root are not
necessarily distinct processors. Both the fixed and the gather root
processors know that they have this role and which processor has the
other role, and each knows the total amount of data in $H_d$. When the
hypercube $H_{d+1}$ is formed from $H'_d$ and $H''_d$, the fixed root
of $H'_d$ knows which processor is the fixed root of $H''_d$ and vice
versa. The gather roots do not known the gather root of the other
subcube.

For all $H_0$ subcubes the invariant holds with fixed and gather root
being the sole processor in $H_0$. To maintain the invariant for
$H_{d+1}$, the two fixed roots of the $H_d$ subcubes exchange
information on their estimated gather time, the size of the root data
blocks, and the identity of the gather root processors.  Both fixed
roots can now determine which gather root will be the gather root of
$H_{d+1}$, namely the gather root of the subcube with the largest
gather time estimate $\sum_{i\in H_d,i\neq r}m_i$ (with ties broken
arbitrarily, but consistently). The first time a fixed root of some
$H_d$ by the exchange determines that it will become a gather root of
$H_{d+1}$, this new gather root knows that it is a gather root. To
maintain the invariant for the following iterations, if the gather
root of $H_d$ does not know whether it will be the gather root of
$H_{d+1}$, it receives information on the gather root in $H_{d+1}$
from its fixed root in $H_d$ which per invariant knows the identity of
the gather root in $H_d$. By exchanging both the gather times
$\sum_{i\in H_d,i\neq r}m_i$ and the sizes of the root data blocks $m_r$,
gather roots can compute the amount of data to be received in each
communication round.

The construction takes $\ceiling{\log_2 p}$ iterations in each of
which pairwise exchanges between the fixed roots of adjacent
hypercubes take place. After this, at most one transmission between
each fixed root and its corresponding gather root is necessary, except
for the first communication round where no such transmission is
needed. Thus at most $2\ceiling{\log_2 p}-1$ dependent communication
operations are required. All information exchanged is of constant
size, consisting of the gather time, the size of the root data block,
and the identity of the gather root.
\end{proof}

As fixed root for a subcube $H_d$ consisting of processors
$[a2^d,\ldots,(a+1)2^d-1]$ we can choose, \eg, the last processor
$i=(a+1)2^d-1$. For the fixed root of this $H_d$ to find the fixed
root of its adjacent subcube, the $d$'th bit of $i$ has to be flipped.
If the $d$'th bit is a $1$, the processor will survive as the fixed
root of $H_{d+1}$. If the number of processors $p$ is not a power of
two, only the last processor $p-1$ has to be specially treated. If a
fixed root in iteration $d$, by flipping bit $d$, determines that its
partner fixed root is larger than $p-1$ it instead chooses $p-1$ as
fixed root in its adjacent, incomplete hypercube. In this iteration,
processor $p-1$ must be prepared to act as fixed root. In iteration
$d$, if processor $p-1$ has bit $d$ set, it knows that some lower
numbered fixed root has chosen $p-1$ as fixed root, and this adjacent
fixed root is $(\ceiling{p/2^d}-1)2^d-1$.  Otherwise, if bit $d$ is
not set, processor $p-1$ has no role in iteration $d$.

Together, these remarks and the three lemmas give the main result.

\begin{theorem}
For any number of processors $p$, the irregular gather problem with
root $r$ and block size $m_i$ for processor $i, 0\leq i<p$ can be
solved in at most $3\ceiling{\log_2 p}\alpha+\beta \sum_{0\leq
  i<p,i\neq r} m_i$ communication time steps with an additive penalty
of at most $\beta(M_d'-m_{r_d'}-\sum_{0\leq j<d'}M_j)$ time steps,
each $M_j$ being the total amount of data in a tree of $2^j$ distinct
processors with local root $r_j$.
\end{theorem}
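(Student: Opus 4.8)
The plan is to assemble the theorem directly from the three lemmas plus the remarks on fixed-root bookkeeping, so the main work is accounting rather than any new argument. I would first observe that the theorem is simply the ``externally given root'' case (Lemma~2) combined with the explicit construction cost of Lemma~\ref{lem:construction}, together with the handling of non-power-of-two $p$ described in the paragraph preceding the theorem. The structure of my proof would therefore be: (i) invoke Lemma~\ref{lem:construction} to bound the tree-construction phase; (ii) invoke the data-gathering bound of Lemma~2 for the externally given root; (iii) add the two phases and verify the stated time bound and penalty term.

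\medskip
First I would account for the tree-construction rounds. By Lemma~\ref{lem:construction} the gather tree for $H_{\ceiling{\log_2 p}}$ is built in $\ceiling{\log_2 p}$ iterations, each iteration using at most two send/receive operations, giving at most $2\ceiling{\log_2 p}-1$ dependent communication start-ups; since all exchanged information is of constant size, this phase costs at most $2\ceiling{\log_2 p}\alpha$ (absorbing the constant-size transfers into the start-up terms, and the $-1$ into the bound). I would note that the non-power-of-two case is already handled by the fixed-root flipping rule in the remarks, so no extra rounds are incurred: the special treatment of processor $p-1$ only reassigns which processor plays the fixed-root role and does not increase the iteration count beyond $\ceiling{\log_2 p}$.

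\medskip
Next I would account for the data-gathering phase. Applying Lemma~2 to the top-level hypercube $H_{\ceiling{\log_2 p}}$ with $d=\ceiling{\log_2 p}$ and the externally given root $r$, the data are gathered to $r$ in $\ceiling{\log_2 p}\alpha+\beta\sum_{0\leq i<p,\,i\neq r}m_i$ time plus the additive penalty $\beta(M_{d'}-m_{r_{d'}}-\sum_{0\leq j<d'}M_j)$, where the $H_j$ for $0\leq j<\ceiling{\log_2 p}$ are exactly the sequence of subcubes from whose gather roots $r$ collects data, and $M_j$ is the total data in the $j$-th such subcube of $2^j$ distinct processors. Adding the construction cost of at most $2\ceiling{\log_2 p}\alpha$ to the gathering cost gives the total $3\ceiling{\log_2 p}\alpha+\beta\sum_{0\leq i<p,\,i\neq r}m_i$ with the stated penalty, completing the argument.

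\medskip
I do not expect a genuine obstacle here, since the theorem is a synthesis of results already proved. The only point requiring care is the additive separation of the two phases: the construction exchanges (thin-arrow communication in Figure~\ref{fig:example}) and the data transfers (thick-arrow edges) are logically sequenced, so their start-up costs add rather than overlap, which is why the latency coefficient is $3\ceiling{\log_2 p}$ ($2\ceiling{\log_2 p}$ for construction plus $\ceiling{\log_2 p}$ for gathering) and not $2\ceiling{\log_2 p}$. The subtlest bookkeeping is confirming that the penalty term from Lemma~2 transfers verbatim, i.e.\ that the $M_j$ and $r_j$ appearing in the theorem are precisely the per-subcube data totals and local roots produced by the Lemma~\ref{lem:construction} construction for the externally fixed $r$; this follows because the modification in Lemma~2 that forces data toward $r$ whenever $r\in\{r',r''\}$ is exactly the construction the theorem's statement refers to.
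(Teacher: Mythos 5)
Your proposal is correct and is essentially the paper's own proof: the paper disposes of the theorem with the single line ``Together, these remarks and the three lemmas give the main result,'' and the intended synthesis is exactly your accounting --- at most $2\ceiling{\log_2 p}-1$ constant-size, dependent start-ups for the tree construction from Lemma~\ref{lem:construction}, plus the $\ceiling{\log_2 p}\alpha+\beta\sum_{i\neq r}m_i$ gathering bound and penalty term from the fixed-root lemma (Lemma~2), with the non-power-of-two case covered by the fixed-root flipping remarks. Your added observations --- that the two phases' latencies are additive, yielding the coefficient $3\ceiling{\log_2 p}$, and that the $M_j$, $r_j$ in the theorem are precisely the per-subcube totals and gather roots of the fixed-root construction --- are exactly the bookkeeping the paper leaves implicit, so nothing is missing.
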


The linear time gather trees can likewise be used for the irregular
scatter problem.  Also, both tree construction and communication
algorithms can obviously be extended to $k$-ported communication
systems, which reduces the number of communication rounds needed from
$\ceiling{\log_2 p}$ to $\ceiling{\log_{k+1} p}$. It is perhaps worth
pointing out that the constructions also provide ordered communication
trees for the regular gather and scatter (as well as for
reduction-to-root) operations with the optimal $\ceiling{\log_2 p}$
number of communication rounds. If all processors know the common root
$r$, tree construction can be done without any actual, extra
communication.

\section{MPI implementations}

We have implemented the irregular gather algorithm in MPI with the
same interface as the \mpigatherv operation. We can thus readily
compare our \tuwgatherv implementation against \mpigatherv. We use the
algorithm of Lemma~\ref{lem:construction} to construct a gather
communication tree which we represent at each process as a sequence of
receive operations followed by a send operation. We use non-blocking
receives to better absorb delays by some MPI processes finishing late,
such that the reception order is determined by the times the processes
become ready. For non-root processes, intermediate buffers gather data
from the processes' children. Since the sizes of all received data are
known by construction, and since it can be assumed that all children
send rank ordered data blocks, it is easy to keep blocks stored in
intermediate buffers in rank order. Since all processes in the
\mpigatherv operation must supply an MPI datatype describing the types
and structure of their blocks, and all data blocks must eventually
match the datatype supplied by the root process, it is possible to
receive and send all intermediate data blocks with a correct MPI
derived datatype. To this end, the signature datatype described
in~\cite{Traff16:typeprog} can be used. Since blocks can be described
by different types with different counts by different processes, it is
important that the signature type used is a ``smallest common
block''. At the root, the gathered data blocks must eventually be
stored as described by the list of displacements and block sizes
supplied in the root process' call of \mpigatherv. This can be
accomplished by constructing a corresponding indexed derived datatype
for each of the children describing where the data blocks go. No
explicit, intermediate buffering at the root is therefore necessary,
and in that sense a zero-copy implementation of the gather algorithm
is possible. If the root displacements describe a contiguous segment
of blocks in rank order (as may be the case in applications), no such
datatype is necessary, and the blocks can be received directly into
their correct positions in the root receive buffer. Our prototype
implementation works under this assumption.

Despite the linear gather time guaranteed by the algorithm, sending
large data blocks multiple times through the gather tree incurs
unnecessary, repeated transmission costs. Practical performance may be
better if such large blocks are sent directly to the root process. We
could implement graceful degradation behavior~\cite{Traff04:gatscat}
by introducing a gather subtree threshold beyond which a subtree in
the gather tree shall send its data directly to the root. In order for
this to work, the tree construction algorithm is extended to also
count, for each gather root, the number of subtrees that have exceeded
the threshold. With this information, the root process knows how many
subtrees will send data blocks directly to the root. For each of these
subtrees, the root first needs to receive the size of the blocks,
based on which the correct position of the block in the receive buffer
can be computed and the block received. We have not implemented this
potential improvement.


\section{A padding performance guideline for irregular collectives}
\label{sec:newguidelines}

Self-consistent MPI performance guidelines formalize expectations on
the performance of given MPI operations by relating them to the
performance of other MPI operations implementing the same
functionality~\cite{Traff10:selfcons}. If a performance guideline is
violated, it gives a constructive hint to the application programmer
and the MPI library implementer how the given operation can be
improved in the given context. Performance guidelines thus provide
sanity checks for MPI library implementations, and can be helpful in
structuring experiments~\cite{Traff17:expected,Traff16:autoguide}.

In order to use regular collectives correctly, the application
programmer must know that all processes supply the same data sizes and
each process must know this data size. The irregular collectives have
a weaker precondition: It suffices that each process by itself knows
its data size with the only requirement that processes that pairwise
exchange data must know and supply the same sizes. If an irregular
collective is used in a situation where a regular one could have been
used instead, we would expect the regular collective to perform
better, or at least not worse in that situation. This is captured in
the performance guideline for \mpigatherv below.
\begin{eqnarray}
\mpigather(m) & \guidelt & \mpigatherv(m) 
\label{guide:regirreg}
\end{eqnarray}
Here $m$ is the total amount of data to be gathered at the root
process, and the guideline states that in a situation where \mpigather
can be used $(m_i=m/p)$, this should perform at least as well as using
instead \mpigatherv, all other things (\eg, root process) being equal
in the two sides of the equation. If the guideline is violated, which
can be tested experimentally, there is something wrong with the MPI
library, and the user would do better by using \mpigatherv instead of
\mpigather. There are reasons to expect that the guideline is not
violated. The \mpigather operation is more specific, does not take long
argument lists of counts and displacements, and good, tree-based
algorithms exist and may have been implemented for this operation.

A common way of dealing with slightly irregular problems is to
transform them into regular ones by padding all buffers up to some
common size and solving the problem by a corresponding regular
collective operation. The argument for having the specialized,
irregular collectives in the MPI specification is that a library can
possibly do better than (or at least as good as) this manual
solution. Thus, we would like to expect that \mpigatherv performs no
worse than first agreeing on the common buffer size and then doing the
regular collective on this, possibly larger common size. This is
expressed in the second irregular performance guideline.
\begin{eqnarray}
\mpigatherv(m) & \guidelt & \mpiallreduce(1) + \mpigather(m') 
\label{guide:padding}
\end{eqnarray}
where $m'=p\max_{0\leq i<p}m_i$ is the total amount of data to be
gathered by the regular \mpigather as computed by the \mpiallreduce
operation. Again, if experiments show this guideline violated, there
is an immediate hint for the application programmer on how to do
better: Use padding. Here we assume that the application programmer
can organize his padded buffers such that no copying back and forth
between buffers is necessary; this may not always be possible, so the
guideline should not be interpreted too strictly but allow some extra
slack on the right-hand side upper bound. Nevertheless, it constrains
what should be expected by a good implementation of \mpigatherv.

The second guideline is particularly interesting for the regular case
where $m_i=m/p$. Here it says that the overhead for \mpigatherv
compared to \mpigather should not be more than a single, small
\mpiallreduce operation. This may be difficult for MPI libraries to
satisfy, but indeed, if it is not, the usefulness of \mpigatherv may
be questionable.

The two performance guidelines give less trivial performance
expectations against which to test our new algorithms instead of only
comparing to the \mpigatherv and \mpigather implementations in some
given MPI library.

\section{Experiments}


We give a preliminary evaluation of the \tuwgatherv implementation. We
do this by comparing to \mpigather and \mpigatherv guided by the
performance guidelines explained in Section~\ref{sec:newguidelines}.


\begin{table}
\caption{Results for \necmpi, $p=35\times 16=560$. Running times are
  in microseconds ($\mu s$). Numbers in red, bold font show violations
  of the performance guidelines, Guideline~(\ref{guide:regirreg}) in the
  \mpigather column, Guideline~(\ref{guide:padding}) in column
  \mpigatherv}
\label{tab:necmpi}
\begin{tiny}
\begin{tabular}{crrrrrrrrrr}
Problem & $m$ & $m'$ & 
\multicolumn{2}{c}{\mpigather} & 
\multicolumn{2}{c}{Guideline~(\ref{guide:padding})} & 
\multicolumn{2}{c}{\mpigatherv} &
\multicolumn{2}{c}{\tuwgatherv} \\
& & & 
\multicolumn{1}{c}{avg} & \multicolumn{1}{c}{min} & 
\multicolumn{1}{c}{avg} & \multicolumn{1}{c}{min} & 
\multicolumn{1}{c}{avg} & \multicolumn{1}{c}{min} & 
\multicolumn{1}{c}{avg} & \multicolumn{1}{c}{min} \\
\hline
Same
& 560 & 560 & 337.83 & 18.12 & 59.43 & 36.95 & 1755.53 & \color{red}{\textbf{1194.95}} & 50.20 & 20.98\\
& 5600 & 5600 & 49.69 & 33.86 & 76.92 & 57.94 & 1753.57 & \color{red}{\textbf{1235.96}} & 45.06 & 37.91\\
& 56000 & 56000 & 183.12 & 169.99 & 207.12 & 186.92 & 1810.53 & \color{red}{\textbf{1347.06}} & 138.28 & 119.92\\
& 560000 & 560000 & 1307.84 & 1293.18 & 1336.58 & 1312.02 & 2644.22 & \color{red}{\textbf{2297.16}} & 824.08 & 802.04\\
& 5600000 & 5600000 & 16874.53 & \color{red}{\textbf{9708.17}} & 9793.19 & 9739.16 & 9243.09 & 7848.02 & 7972.50 & 7951.97\\
Random
& 844 & 1120 & 31.67 & 20.98 & 55.82 & 37.91 & 1707.73 & \color{red}{\textbf{1415.01}} & 62.38 & 28.85\\
& 5989 & 11200 & 68.23 & 57.94 & 93.61 & 77.01 & 1720.36 & \color{red}{\textbf{948.91}} & 88.62 & 42.92\\
& 57615 & 112000 & 302.17 & 290.16 & 324.11 & 305.18 & 1851.97 & \color{red}{\textbf{1305.82}} & 185.11 & 144.96\\
& 571327 & 1119440 & 2243.32 & 2229.93 & 2258.11 & 2242.09 & 2702.19 & \color{red}{\textbf{2511.02}} & 1054.04 & 827.07\\
& 5546939 & 11189360 & 21490.23 & 18246.89 & 31686.29 & 18262.15 & 7940.00 & 7556.92 & 15855.80 & 9074.93\\
Spikes
& 1036 & 2800 & 38.74 & 29.09 & 59.58 & 43.87 & 5262.01 & \color{red}{\textbf{1353.03}} & 58.94 & 28.85\\
& 6391 & 28000 & 111.11 & 98.94 & 134.16 & 118.97 & 1726.01 & \color{red}{\textbf{1243.11}} & 76.38 & 40.05\\
& 57945 & 280000 & 684.50 & 667.10 & 703.41 & 684.98 & 1759.52 & \color{red}{\textbf{1068.12}} & 163.28 & 113.96\\
& 570446 & 2800000 & 4839.81 & 4822.02 & 4874.30 & 4847.05 & 2029.88 & 1711.85 & 1088.44 & 1050.00\\
& 6100438 & 28000000 & 46745.77 & 44775.96 & 50961.33 & 44814.11 & 8909.78 & 8166.07 & 10682.46 & 8350.13\\
Decreasing
& 842 & 1680 & 33.10 & 25.99 & 53.16 & 41.96 & 1690.14 & \color{red}{\textbf{845.91}} & 40.90 & 25.03\\
& 5900 & 11760 & 70.28 & 61.99 & 92.52 & 79.87 & 1731.86 & \color{red}{\textbf{952.96}} & 94.30 & 61.04\\
& 56400 & 112560 & 315.14 & 299.93 & 341.26 & 318.05 & 1986.37 & \color{red}{\textbf{1213.07}} & 212.21 & 174.05\\
& 561320 & 1120560 & 2244.04 & 2228.98 & 2267.95 & 2247.10 & 2900.91 & \color{red}{\textbf{2758.03}} & 1223.44 & 1196.86\\
& 5610320 & 11200560 & 22649.90 & 18422.13 & 18473.74 & 18440.96 & 7727.55 & 7624.86 & 12091.68 & 11695.86\\
Alternating
& 560 & 560 & 5793.04 & 20.03 & 48.54 & 34.81 & 4837.82 & \color{red}{\textbf{1188.04}} & 32.95 & 20.98\\
& 5600 & 8400 & 58.68 & 49.83 & 86.03 & 67.00 & 1720.53 & \color{red}{\textbf{1029.01}} & 64.41 & 41.01\\
& 56000 & 84000 & 244.91 & 231.03 & 269.17 & 246.05 & 1860.01 & \color{red}{\textbf{1120.09}} & 153.70 & 139.00\\
& 560000 & 840000 & 1649.08 & 1631.98 & 1666.12 & 1648.90 & 11386.20 & \color{red}{\textbf{2473.12}} & 980.67 & 967.03\\
& 5600000 & 8400000 & 29599.52 & 13819.93 & 22029.07 & 13828.04 & 7947.92 & 7896.18 & 14881.42 & 9907.96\\
Two blocks
& 2 & 560 & 28.39 & 18.12 & 50.70 & 36.95 & 5.74 & 1.91 & 28.09 & 18.12\\
& 20 & 5600 & 49.09 & 41.01 & 72.24 & 58.89 & 6.89 & 2.86 & 28.25 & 18.12\\
& 200 & 56000 & 185.02 & 171.90 & 209.54 & 189.07 & 7.31 & 2.86 & 29.83 & 19.07\\
& 2000 & 560000 & 1306.06 & 1291.04 & 1331.58 & 1313.92 & 12.96 & 8.11 & 33.93 & 23.13\\
& 20000 & 5600000 & 16682.08 & 9497.17 & 9559.33 & 9510.04 & 39.24 & 36.00 & 63.39 & 51.02\\
\hline
\end{tabular}
\end{tiny}
\end{table}

\begin{table}
\caption{Results for \mvapich, $p=35\times 16=560$. Running times are
  in microseconds ($\mu s$). Numbers in red, bold font show violations
  of the performance guidelines, Guideline~(\ref{guide:regirreg}) in the
  \mpigather column, Guideline~(\ref{guide:padding}) in column
  \mpigatherv}
\label{tab:mvapich}
\begin{tiny}
\begin{tabular}{crrrrrrrrrr}
Problem & $m$ & $m'$ & 
\multicolumn{2}{c}{\mpigather} & 
\multicolumn{2}{c}{Guideline~(\ref{guide:padding})} & 
\multicolumn{2}{c}{\mpigatherv} &
\multicolumn{2}{c}{\tuwgatherv} \\
& & & 
\multicolumn{1}{c}{avg} & \multicolumn{1}{c}{min} & 
\multicolumn{1}{c}{avg} & \multicolumn{1}{c}{min} & 
\multicolumn{1}{c}{avg} & \multicolumn{1}{c}{min} & 
\multicolumn{1}{c}{avg} & \multicolumn{1}{c}{min} \\
\hline
Same
& 560 & 560 & 56.69 & 25.03 & 295.59 & 77.01 & 1083.19 & \color{red}{\textbf{887.87}} & 53.32 & 38.15\\
& 5600 & 5600 & 250.77 & 84.88 & 284.61 & 114.92 & 1229.84 & \color{red}{\textbf{1013.99}} & 77.34 & 48.88\\
& 56000 & 56000 & 317.90 & 302.08 & 366.31 & 323.06 & 1403.33 & \color{red}{\textbf{1122.95}} & 193.72 & 155.93\\
& 560000 & 560000 & 3736.72 & \color{red}{\textbf{3597.02}} & 3852.69 & 3761.05 & 3760.52 & 3576.99 & 1149.36 & 1132.01\\
& 5600000 & 5600000 & 31109.11 & \color{red}{\textbf{30703.07}} & 30893.50 & 30778.17 & 7758.92 & 7701.16 & 15844.17 & 15782.12\\
Random
& 844 & 1120 & 84.15 & 76.06 & 105.57 & 93.94 & 2357.83 & \color{red}{\textbf{1780.03}} & 60.69 & 49.83\\
& 5989 & 11200 & 141.02 & 136.14 & 161.29 & 150.92 & 2413.02 & \color{red}{\textbf{1760.01}} & 71.50 & 56.98\\
& 57615 & 112000 & 573.50 & 535.01 & 608.74 & 573.87 & 2597.00 & \color{red}{\textbf{1867.06}} & 168.84 & 157.12\\
& 571327 & 1119440 & 5006.39 & 4858.97 & 7115.21 & 5001.07 & 3812.06 & 3521.92 & 1856.02 & 849.96\\
& 5546939 & 11189360 & 54121.13 & 53570.03 & 54998.99 & 52770.14 & 8087.99 & 7569.07 & 13339.92 & 13207.91\\
Spikes
& 1036 & 2800 & 98.12 & 81.06 & 117.19 & 102.04 & 2353.78 & \color{red}{\textbf{1659.87}} & 57.43 & 47.92\\
& 6391 & 28000 & 204.39 & 198.84 & 226.27 & 215.05 & 2351.88 & \color{red}{\textbf{1694.92}} & 70.14 & 56.98\\
& 57945 & 280000 & 1200.79 & 1181.13 & 1225.05 & 1204.01 & 2432.12 & \color{red}{\textbf{1807.93}} & 157.08 & 144.96\\
& 570446 & 2800000 & 5283.16 & 4540.92 & 5573.90 & 5078.08 & 2965.85 & 2594.95 & 1153.44 & 1137.97\\
& 6100438 & 28000000 & 130602.19 & 129426.96 & 131404.61 & 129746.20 & 8642.77 & 8414.03 & 13026.98 & 12981.18\\
Decreasing
& 842 & 1680 & 85.32 & 72.96 & 110.13 & 96.80 & 2349.97 & \color{red}{\textbf{1697.78}} & 53.90 & 44.82\\
& 5900 & 11760 & 144.53 & 133.99 & 165.36 & 154.97 & 2398.28 & \color{red}{\textbf{1858.00}} & 86.67 & 75.10\\
& 56400 & 112560 & 580.63 & 550.03 & 607.42 & 577.93 & 2753.80 & \color{red}{\textbf{1947.16}} & 235.09 & 220.06\\
& 561320 & 1120560 & 5091.47 & 4953.86 & 5138.67 & 5054.95 & 4221.21 & 3954.17 & 4238.99 & 2449.99\\
& 5610320 & 11200560 & 54855.09 & 54377.08 & 55570.82 & 54203.99 & 7769.08 & 7728.82 & 19093.30 & 19005.78\\
Alternating
& 560 & 560 & 84.05 & 57.94 & 107.30 & 91.79 & 2298.24 & \color{red}{\textbf{1638.17}} & 52.58 & 42.92\\
& 5600 & 8400 & 127.73 & 115.87 & 152.87 & 139.95 & 2327.04 & \color{red}{\textbf{1632.93}} & 69.52 & 56.98\\
& 56000 & 84000 & 441.42 & 401.02 & 469.97 & 445.84 & 2538.67 & \color{red}{\textbf{2073.05}} & 172.13 & 157.12\\
& 560000 & 840000 & 4525.85 & 4410.98 & 4591.90 & 4480.84 & 3777.33 & 3542.90 & 1132.91 & 1110.79\\
& 5600000 & 8400000 & 52944.41 & 46571.02 & 46649.89 & 46545.03 & 11048.42 & 7767.92 & 15145.07 & 15044.93\\
Two blocks
& 2 & 560 & 81.62 & 73.19 & 279.81 & 92.03 & 7.70 & 1.91 & 40.42 & 30.99\\
& 20 & 5600 & 613.77 & 108.00 & 197.26 & 126.12 & 7.39 & 1.91 & 40.66 & 30.99\\
& 200 & 56000 & 787.44 & 301.84 & 477.54 & 332.83 & 8.76 & 1.91 & 82.04 & 30.99\\
& 2000 & 560000 & 3760.01 & 3598.93 & 3855.75 & 3764.87 & 11.48 & 2.86 & 45.08 & 33.86\\
& 20000 & 5600000 & 32092.41 & 31884.91 & 31959.97 & 31888.96 & 41.83 & 36.00 & 74.30 & 62.94\\
\hline
\end{tabular}
\end{tiny}
\end{table}

\begin{table}
\caption{Results for \openmpi, $p=35\times 16=560$. Running times are
  in microseconds ($\mu s$). Numbers in red, bold font show violations
  of the performance guidelines, Guideline~(\ref{guide:regirreg}) in the
  \mpigather column, Guideline~(\ref{guide:padding}) in column
  \mpigatherv}
\label{tab:openmpi}
\begin{tiny}
\begin{tabular}{crrrrrrrrrr}
Problem & $m$ & $m'$ & 
\multicolumn{2}{c}{\mpigather} & 
\multicolumn{2}{c}{Guideline~(\ref{guide:padding})} & 
\multicolumn{2}{c}{\mpigatherv} &
\multicolumn{2}{c}{\tuwgatherv} \\
& & & 
\multicolumn{1}{c}{avg} & \multicolumn{1}{c}{min} & 
\multicolumn{1}{c}{avg} & \multicolumn{1}{c}{min} & 
\multicolumn{1}{c}{avg} & \multicolumn{1}{c}{min} & 
\multicolumn{1}{c}{avg} & \multicolumn{1}{c}{min} \\
\hline
Same
& 560 & 560 & 173.80 & 45.00 & 239.85 & 123.00 & 967.65 & \color{red}{\textbf{170.00}} & 219.27 & 67.00\\
& 5600 & 5600 & 156.08 & 74.00 & 242.00 & 153.00 & 969.65 & \color{red}{\textbf{176.00}} & 124.20 & 79.00\\
& 56000 & 56000 & 926.56 & \color{red}{\textbf{764.00}} & 1065.35 & 872.00 & 430.83 & 191.00 & 749.55 & 535.00\\
& 560000 & 560000 & 3619.15 & \color{red}{\textbf{3378.00}} & 3758.16 & 3463.00 & 1788.52 & 1006.00 & 2390.60 & 1899.00\\
& 5600000 & 5600000 & 14500.19 & 13844.00 & 14405.19 & 14053.00 & 14210.47 & 13993.00 & 16043.45 & 15939.00\\
Random
& 843 & 1120 & 106.56 & 61.00 & 174.33 & 125.00 & 1139.69 & 115.00 & 121.85 & 70.00\\
& 5892 & 11200 & 155.07 & 90.00 & 207.55 & 160.00 & 1073.25 & 134.00 & 132.68 & 89.00\\
& 57435 & 112000 & 1119.33 & 1071.00 & 1238.20 & 1184.00 & 485.49 & 193.00 & 560.53 & 509.00\\
& 542098 & 1116640 & 7284.08 & 7059.00 & 7526.65 & 7228.00 & 2214.36 & 1581.00 & 1890.25 & 1842.00\\
& 5687094 & 11173120 & 20778.19 & 19741.00 & 20860.49 & 20379.00 & 14068.03 & 13896.00 & 12936.69 & 12858.00\\
Spikes
& 984 & 2800 & 108.67 & 62.00 & 187.49 & 137.00 & 1125.93 & \color{red}{\textbf{144.00}} & 125.40 & 70.00\\
& 6146 & 28000 & 209.33 & 127.00 & 256.87 & 196.00 & 1056.89 & \color{red}{\textbf{214.00}} & 132.23 & 77.00\\
& 54951 & 280000 & 1885.92 & 1829.00 & 2004.80 & 1932.00 & 1142.00 & 285.00 & 594.28 & 444.00\\
& 525455 & 2800000 & 10698.63 & 10047.00 & 10799.77 & 10433.00 & 3012.12 & 2909.00 & 1764.33 & 1707.00\\
& 5000460 & 28000000 & 41352.32 & 40729.00 & 42246.07 & 41248.00 & 10434.21 & 10021.00 & 10070.91 & 9774.00\\
Decreasing
& 842 & 1680 & 106.77 & 56.00 & 181.56 & 134.00 & 1107.64 & 126.00 & 123.28 & 69.00\\
& 5900 & 11760 & 143.09 & 96.00 & 421.40 & 161.00 & 1258.08 & \color{red}{\textbf{518.00}} & 162.55 & 106.00\\
& 56400 & 112560 & 1106.97 & 1058.00 & 1233.16 & 1158.00 & 451.28 & 195.00 & 638.89 & 589.00\\
& 561320 & 1120560 & 7348.33 & 7125.00 & 7994.61 & 7220.00 & 2585.16 & 2077.00 & 2586.47 & 2427.00\\
& 5610320 & 11200560 & 20940.69 & 20003.00 & 20914.79 & 20360.00 & 13757.08 & 13538.00 & 26159.17 & 21454.00\\
Alternating
& 560 & 560 & 137.84 & 58.00 & 245.55 & 124.00 & 1209.64 & \color{red}{\textbf{136.00}} & 120.92 & 63.00\\
& 5600 & 8400 & 142.93 & 94.00 & 201.03 & 155.00 & 1204.63 & \color{red}{\textbf{282.00}} & 115.28 & 71.00\\
& 56000 & 84000 & 935.68 & 886.00 & 1028.65 & 987.00 & 329.56 & 178.00 & 539.87 & 476.00\\
& 560000 & 840000 & 7471.88 & 7397.00 & 7592.03 & 7491.00 & 2289.97 & 1825.00 & 2048.68 & 1718.00\\
& 5600000 & 8400000 & 18110.28 & 17509.00 & 18244.43 & 17995.00 & 13781.65 & 13633.00 & 14882.91 & 14797.00\\
Two blocks
& 2 & 560 & 99.95 & 53.00 & 175.83 & 127.00 & 16.81 & 2.00 & 109.59 & 59.00\\
& 20 & 5600 & 120.04 & 62.00 & 189.44 & 148.00 & 24.79 & 2.00 & 115.88 & 65.00\\
& 200 & 56000 & 1195.01 & 1144.00 & 1310.89 & 1246.00 & 24.93 & 2.00 & 112.56 & 56.00\\
& 2000 & 560000 & 3583.89 & 3179.00 & 3566.36 & 3305.00 & 16.37 & 4.00 & 188.45 & 71.00\\
& 20000 & 5600000 & 14093.35 & 13816.00 & 14272.21 & 14026.00 & 98.35 & 56.00 & 155.19 & 117.00\\
\hline
\end{tabular}
\end{tiny}
\end{table}

We test our algorithm on gather problems of varying degrees of
irregularity. Let $b, b>0$ be a chosen, average block size (in some
unit, here \mpiint). We have $p$ MPI processes, and use as fixed
gather root $r=\floor{p/2}$.  Our problems are as follows with names
indicating how block sizes are chosen for the processes.
\begin{description}
\item[Same:] For process $i$, $m_i=b$.
\item[Random:] Each $m_i$ is chosen uniformly at random in the range
  $[1,2b]$.
\item[Spikes:] Each $m_i$ is either $\rho b, \rho>1$ or $1$, chosen
  randomly with probability $1/\rho$ for each process $i$.
\item[Decreasing:] For process $i$, $m_i=\floor{\frac{2b(p-i)}{p}}+1$
\item[Alternating:] For even numbered processes, $m_i=b+\floor{b/2}$,
  for odd numbered processes $m_i=b-\floor{b/2}$.
\item[Two blocks:] All $m_i=0$, except $m_0=b$ and
  $m_{p-1}=b$.
\end{description}
These problem types, except for the last, specifically always have
$m_i>0$. This choice ensures that an implementation cannot take
advantage of not having to send empty blocks. We perform a series of
(weak scaling) experiments with $b=1,10,\ldots,10\,0000$; the total
problem size in each case is $m=\sum_{0\leq i<p}m_i$ and increasing
linearly with $p$ (except for the two blocks problems). For comparison
with \mpigather and for the padding performance
Guideline~(\ref{guide:padding}), the padded block size is $\max_{0\leq
  i<p}m_i$ and the total size $m'=p\max_{0\leq i<p}m_i$. For the
\textbf{spikes} problems, we have taken $\rho=5$.

In our experiments we perform 75 time measurements of each of the
collective operations with 10 initial, not timed, warmup calls, and
compute average and minimum times (the fastest completion time seen
over the 75 repetitions). For the average times we have not done any
outlier removal.  Before each measurement, MPI processes are
synchronized with the native \mpibarrier operation, and the running
time of a measurement is the time of the slowest process, which for
the gather operations is usually the root process.

Our first test system is a small InfiniBand cluster with 36 nodes each
consisting of two 8-core AMD Opteron 6134 processors running at
2.3GHz. The interconnect is a QDR InfiniBand MT26428. We have tried
the implementations with three different MPI libraries, namely
\jupiternecmpi, \jupitermvapichtwotwo and \jupiteropenmpitwoone using
the \gccversion compiler with -O3 optimization.  We present the
results in tabular form, see Table~\ref{tab:necmpi},
Table~\ref{tab:mvapich} and Table~\ref{tab:openmpi}. Running times are
in microseconds ($\mu s$).


Average and minimum, best observed time differ considerably (which may
be due to outliers), and comparison based on averages may not be
well-founded. Nevertheless, the results show the three library
implementations of the \mpigather and \mpigatherv operations to
(surprisingly) differ considerably in quality. For \mpigather, this
can best be seen for the \textbf{same} problem type, where the \necmpi
minimum time is about 9000 $\mu s$ and \mvapich at 31000 $\mu s$ with
\mvapich at 13000 $\mu s$ for the largest problem instance. Also for
the smaller problem sizes, the differences can be considerable. For
all three libraries, it is also clear that \mpigatherv is implemented
with a trivial algorithm compared to \mpigather; this makes
\mpigatherv an expensive operation for small problem sizes. On the
other hand, the algorithms used for \mpigather are not well chosen for
large problem instances, where for all three libraries, the simple,
direct to root implementations used for \mpigatherv perform
better. The trivial performance Guideline~(\ref{guide:regirreg}) is
violated in such cases.

For the smaller instances of the irregular problem types, all
libraries fail Guideline~(\ref{guide:padding}) with their \mpigatherv
implementations by large factors, whereas \tuwgatherv, except for the
\textbf{two blocks} problems, easily fulfill the padding guideline,
often by a considerable factor; the new \tuwgatherv implementation is
faster than the library implementations often by factors of 5 to more
than 20.  There are even cases where \tuwgatherv is faster than the
library \mpigather implementations (seen for the \textbf{same} block
size problem type).

\begin{table}
\caption{Results for \intelmpi, default algorithms settings,
  $p=100\times 16=1600$. Running times are in microseconds ($\mu
  s$). Numbers in red, bold font show violations of the performance
  guidelines, Guideline~(\ref{guide:regirreg}) in the \mpigather
  column, Guideline~(\ref{guide:padding}) in column \mpigatherv}
\label{tab:intelmpi100}
\begin{tiny}
\begin{tabular}{crrrrrrrrrr}
Problem & $m$ & $m'$ & 
\multicolumn{2}{c}{\mpigather} & 
\multicolumn{2}{c}{Guideline~(\ref{guide:padding})} & 
\multicolumn{2}{c}{\mpigatherv} &
\multicolumn{2}{c}{\tuwgatherv} \\
& & & 
\multicolumn{1}{c}{avg} & \multicolumn{1}{c}{min} & 
\multicolumn{1}{c}{avg} & \multicolumn{1}{c}{min} & 
\multicolumn{1}{c}{avg} & \multicolumn{1}{c}{min} & 
\multicolumn{1}{c}{avg} & \multicolumn{1}{c}{min} \\
\hline
Same
& 1600 & 1600 & 6785.78 & \color{red}{\textbf{4585.03}} & 7325.16 & 5573.03 & 3881.19 & 2460.96 & 51.04 & 33.86\\
& 16000 & 16000 & 7829.35 & \color{red}{\textbf{5223.99}} & 8657.35 & 6438.97 & 5567.75 & 3410.10 & 137.79 & 81.06\\
& 160000 & 160000 & 8149.66 & \color{red}{\textbf{6762.03}} & 9401.96 & 7140.87 & 3437.90 & 2695.08 & 334.08 & 297.07\\
& 1600000 & 1600000 & 21294.59 & \color{red}{\textbf{8141.99}} & 23851.03 & 10309.93 & 14069.02 & 3842.12 & 2253.27 & 2161.03\\
& 16000000 & 16000000 & 195922.00 & \color{red}{\textbf{190047.03}} & 195237.15 & 192334.89 & 194682.58 & 189079.05 & 22477.23 & 21020.89\\
Random
& 2439 & 3200 & 7334.39 & 5353.93 & 7927.66 & 5856.04 & 4103.67 & 2656.94 & 65.00 & 35.05\\
& 16738 & 32000 & 7821.15 & 6059.89 & 8902.53 & 6178.86 & 4529.37 & 2595.90 & 108.56 & 62.94\\
& 162498 & 320000 & 10918.63 & 7423.16 & 12428.28 & 7917.88 & 4069.03 & 2729.89 & 337.99 & 308.04\\
& 1590753 & 3200000 & 38088.87 & 10705.95 & 39633.42 & 15350.10 & 24666.81 & 4288.91 & 2485.42 & 2177.95\\
& 15960523 & 31998400 & 47698.98 & 44461.97 & 48081.72 & 44685.84 & 283416.54 & 272669.08 & 23039.06 & 21776.91\\
Spikes
& 2908 & 8000 & 6946.98 & 5175.11 & 8367.50 & 5854.85 & 3964.65 & 2265.93 & 74.26 & 42.20\\
& 17476 & 80000 & 7544.19 & 5806.92 & 8502.44 & 6906.99 & 3722.79 & 2291.92 & 100.25 & 70.81\\
& 150302 & 800000 & 12313.47 & 7770.06 & 14366.38 & 7728.10 & 3464.65 & 2391.10 & 317.45 & 267.03\\
& 1596281 & 8000000 & 87790.86 & 83186.15 & 89378.70 & 86223.13 & 48634.55 & 3810.17 & 2166.81 & 2092.12\\
& 16051279 & 80000000 & 113636.35 & 105054.14 & 114928.19 & 104699.85 & 24557.99 & 22396.80 & 24263.74 & 22591.11\\
Decreasing
& 2402 & 4800 & 7178.11 & 4694.94 & 8533.26 & 6350.99 & 3950.61 & 2315.04 & 69.47 & 36.00\\
& 16820 & 33600 & 7104.90 & 4496.10 & 7834.66 & 6560.80 & 3602.98 & 2018.93 & 132.64 & 85.12\\
& 161000 & 321600 & 10067.75 & 6834.98 & 12254.34 & 7071.97 & 5158.21 & 2415.90 & 405.28 & 351.91\\
& 1602000 & 3201600 & 32169.61 & 10272.03 & 39441.70 & 14613.15 & 23314.90 & 3845.93 & 2682.00 & 2382.04\\
& 16011200 & 32001600 & 49304.18 & 44602.87 & 49702.01 & 45029.88 & 275515.48 & 30343.06 & 27429.17 & 23132.80\\
Alternating
& 1600 & 1600 & 7076.79 & 4745.01 & 7580.29 & 5074.02 & 3463.82 & 2285.96 & 59.53 & 31.95\\
& 16000 & 24000 & 7654.16 & 5013.94 & 7520.49 & 5375.86 & 4707.53 & 2198.93 & 78.36 & 59.13\\
& 160000 & 240000 & 8806.64 & 6551.98 & 10412.87 & 6759.88 & 3286.97 & 2158.88 & 334.26 & 303.03\\
& 1600000 & 2400000 & 25057.63 & 9432.08 & 26998.28 & 13028.86 & 15485.17 & 3649.95 & 2368.45 & 2177.00\\
& 16000000 & 24000000 & 293720.20 & 289201.97 & 294498.66 & 291086.91 & 273824.64 & 267002.11 & 21879.65 & 20879.03\\
Two blocks
& 2 & 1600 & 6305.51 & 4293.92 & 7515.37 & 5100.01 & 7.02 & 5.01 & 40.17 & 21.22\\
& 20 & 16000 & 7039.35 & 4035.95 & 7625.26 & 5104.06 & 11.43 & 5.01 & 40.76 & 22.89\\
& 200 & 160000 & 7868.89 & 6004.10 & 9141.56 & 7048.13 & 36.95 & 5.01 & 58.23 & 25.03\\
& 2000 & 1600000 & 22172.74 & 8018.97 & 25377.00 & 10148.05 & 14.13 & 6.91 & 56.07 & 29.80\\
& 20000 & 16000000 & 193004.67 & 40503.98 & 199676.28 & 187478.07 & 57.81 & 28.13 & 108.19 & 57.94\\
\hline
\end{tabular}
\end{tiny}
\end{table}

\begin{table}
\caption{Results for \intelmpi, default algorithms settings,
  $p=200\times 16=3200$. Running times are in microseconds ($\mu
  s$). Numbers in red, bold font show violations of the performance
  guidelines, Guideline~(\ref{guide:regirreg}) in the \mpigather
  column, Guideline~(\ref{guide:padding}) in column \mpigatherv}
\label{tab:intelmpi200}
\begin{tiny}
\begin{tabular}{crrrrrrrrrr}
Problem & $m$ & $m'$ & 
\multicolumn{2}{c}{\mpigather} & 
\multicolumn{2}{c}{Guideline~(\ref{guide:padding})} & 
\multicolumn{2}{c}{\mpigatherv} &
\multicolumn{2}{c}{\tuwgatherv} \\
& & & 
\multicolumn{1}{c}{avg} & \multicolumn{1}{c}{min} & 
\multicolumn{1}{c}{avg} & \multicolumn{1}{c}{min} & 
\multicolumn{1}{c}{avg} & \multicolumn{1}{c}{min} & 
\multicolumn{1}{c}{avg} & \multicolumn{1}{c}{min} \\
\hline
Same
& 3200 & 3200 & 28437.24 & 19305.94 & 28689.85 & 21116.97 & 30901.47 & \color{red}{\textbf{23523.81}} & 92.55 & 41.01\\
& 32000 & 32000 & 27679.23 & 21731.85 & 28434.82 & 23573.16 & 32097.67 & \color{red}{\textbf{26045.08}} & 143.67 & 90.12\\
& 320000 & 320000 & 24762.15 & 19256.11 & 22451.83 & 18827.92 & 27518.91 & \color{red}{\textbf{25965.21}} & 593.66 & 538.11\\
& 3200000 & 3200000 & 44065.08 & 24388.07 & 55709.50 & 31090.97 & 47223.59 & 30979.87 & 4813.63 & 4436.02\\
& 32000000 & 32000000 & 300888.95 & 218910.93 & 323063.74 & 227644.92 & 308667.72 & 220571.99 & 48447.10 & 45253.99\\
Random
& 4829 & 6400 & 24021.80 & 15135.05 & 24180.03 & 15941.86 & 22893.44 & 15330.08 & 108.13 & 64.85\\
& 33497 & 64000 & 21767.06 & 17243.15 & 22250.89 & 18190.86 & 24442.87 & 15856.98 & 182.17 & 108.00\\
& 321787 & 640000 & 28661.41 & 18878.22 & 36155.43 & 20350.93 & 27181.52 & \color{red}{\textbf{23756.98}} & 598.71 & 530.00\\
& 3169326 & 6400000 & 67277.72 & 44735.91 & 76511.79 & 48445.94 & 52865.61 & 39208.89 & 4665.05 & 4305.12\\
& 31446972 & 63996800 & 109308.41 & 99385.02 & 108454.98 & 95494.03 & 310909.81 & 272596.12 & 40920.26 & 40133.95\\
Spikes
& 5560 & 16000 & 23889.74 & 17380.95 & 24517.83 & 19174.10 & 26932.36 & \color{red}{\textbf{19545.08}} & 86.25 & 50.07\\
& 33335 & 160000 & 21551.87 & 18486.98 & 22186.02 & 18473.86 & 28139.84 & \color{red}{\textbf{19646.88}} & 140.92 & 97.04\\
& 326053 & 1600000 & 37614.00 & 21208.05 & 40370.15 & 23549.08 & 28761.65 & 23475.89 & 590.82 & 535.01\\
& 3142572 & 16000000 & 252393.25 & 93301.06 & 263011.72 & 99792.00 & 81848.83 & 80199.96 & 4185.84 & 3964.90\\
& 33252535 & 160000000 & 237369.09 & 218248.84 & 234426.21 & 219135.05 & 66427.58 & 60978.17 & 47125.30 & 44766.90\\
Decreasing
& 4802 & 9600 & 22828.26 & 16659.02 & 23782.15 & 16151.19 & 26215.08 & 19952.06 & 106.31 & 61.04\\
& 33620 & 67200 & 22501.45 & 18682.00 & 24277.93 & 18852.95 & 28763.90 & 21468.16 & 157.98 & 124.93\\
& 321800 & 643200 & 33628.73 & 19536.02 & 40671.71 & 20276.07 & 35724.42 & 33167.12 & 653.67 & 598.91\\
& 3202800 & 6403200 & 68307.70 & 44342.04 & 79130.70 & 47766.92 & 61027.90 & 36426.78 & 4953.15 & 4633.19\\
& 32012000 & 64003200 & 106778.59 & 97492.93 & 104682.51 & 99535.94 & 297412.84 & \color{red}{\textbf{282826.19}} & 47414.95 & 43900.97\\
Alternating
& 3200 & 3200 & 25059.58 & 15811.92 & 24361.71 & 16531.94 & 25777.54 & \color{red}{\textbf{19775.15}} & 103.63 & 49.83\\
& 32000 & 48000 & 22880.73 & 17456.05 & 24041.51 & 18349.89 & 25712.11 & \color{red}{\textbf{22413.02}} & 141.23 & 100.14\\
& 320000 & 480000 & 27044.46 & 18890.86 & 26782.88 & 18590.93 & 28739.11 & \color{red}{\textbf{25625.94}} & 625.87 & 550.99\\
& 3200000 & 4800000 & 54848.80 & 30683.99 & 62371.10 & 40065.05 & 51288.62 & 30776.98 & 4826.48 & 4387.86\\
& 32000000 & 48000000 & 315070.99 & 311951.16 & 317679.07 & 313787.94 & 307983.23 & 282216.07 & 47164.98 & 44992.92\\
Two blocks
& 2 & 3200 & 25678.38 & 15871.05 & 23928.10 & 13593.91 & 11.10 & 8.11 & 36.08 & 23.13\\
& 20 & 32000 & 23554.28 & 17744.06 & 22819.97 & 17996.07 & 12.20 & 8.11 & 65.89 & 25.03\\
& 200 & 320000 & 22381.84 & 17951.97 & 22820.16 & 19130.95 & 17.23 & 8.82 & 54.15 & 30.04\\
& 2000 & 3200000 & 46755.45 & 24815.08 & 55163.29 & 30699.97 & 18.10 & 8.82 & 88.52 & 42.92\\
& 20000 & 32000000 & 308257.55 & 210777.04 & 332478.58 & 218914.99 & 39.49 & 31.95 & 124.42 & 74.15\\
\hline
\end{tabular}
\end{tiny}
\end{table}

\begin{table}
\caption{Results for \intelmpi, default algorithms settings,
  $p=400\times 16=6400$. Running times are in microseconds ($\mu
  s$). Numbers in red, bold font show violations of the performance
  guidelines, Guideline~(\ref{guide:regirreg}) in the \mpigather
  column, Guideline~(\ref{guide:padding}) in column \mpigatherv}
\label{tab:intelmpi400-default}
\begin{tiny}
\begin{tabular}{crrrrrrrrrr}
Problem & $m$ & $m'$ & 
\multicolumn{2}{c}{\mpigather} & 
\multicolumn{2}{c}{Guideline~(\ref{guide:padding})} & 
\multicolumn{2}{c}{\mpigatherv} &
\multicolumn{2}{c}{\tuwgatherv} \\
& & & 
\multicolumn{1}{c}{avg} & \multicolumn{1}{c}{min} & 
\multicolumn{1}{c}{avg} & \multicolumn{1}{c}{min} & 
\multicolumn{1}{c}{avg} & \multicolumn{1}{c}{min} & 
\multicolumn{1}{c}{avg} & \multicolumn{1}{c}{min} \\
\hline
Same
& 6400 & 6400 & 143224.73 & 103827.95 & 134339.83 & 116875.89 & 198096.70 & \color{red}{\textbf{147197.96}} & 483.80 & 50.07 \\
& 64000 & 64000 & 127234.13 & 96213.10 & 119072.93 & 96263.17 & 157288.30 & \color{red}{\textbf{127314.09}} & 1119.38 & 158.07 \\
& 640000 & 640000 & 135538.09 & 92452.05 & 115266.33 & 98417.04 & 184500.61 & \color{red}{\textbf{150245.90}} & 1208.13 & 964.16 \\
& 6400000 & 6400000 & 174751.44 & 111608.03 & 168121.69 & 127385.85 & 229787.77 & \color{red}{\textbf{162585.97}} & 8883.16 & 8399.01 \\
& 64000000 & 64000000 & 478717.77 & 351228.00 & 448441.26 & 345794.92 & 521594.60 & 400845.05 & 82694.19 & 81154.11 \\
Random
& 9606 & 12800 & 112400.91 & 83522.08 & 119423.66 & 97676.99 & 165242.37 & \color{red}{\textbf{143667.94}} & 1149.91 & 80.11 \\
& 67404 & 128000 & 113700.98 & 92371.94 & 114324.38 & 100589.04 & 159724.97 & \color{red}{\textbf{135301.83}} & 746.22 & 169.04 \\
& 640316 & 1280000 & 130619.07 & 97826.96 & 117179.47 & 100124.84 & 196255.20 & \color{red}{\textbf{154004.10}} & 1431.84 & 959.16 \\
& 6365598 & 12800000 & 266579.81 & 133107.19 & 256535.67 & 194566.01 & 240661.89 & 174778.94 & 9395.70 & 8400.92 \\
& 64366469 & 128000000 & 383281.22 & 240486.86 & 382698.89 & 234140.87 & 568810.39 & 449437.14 & 85171.69 & 84529.88 \\
Spikes
& 11576 & 32000 & 115877.39 & 83566.90 & 107814.19 & 95278.98 & 155785.61 & \color{red}{\textbf{128391.03}} & 1546.82 & 67.00 \\
& 68483 & 320000 & 112930.08 & 88580.85 & 106333.22 & 95693.83 & 158825.65 & \color{red}{\textbf{129721.16}} & 912.51 & 162.12 \\
& 641627 & 3200000 & 186607.47 & 99061.97 & 179326.41 & 102573.87 & 170716.86 & \color{red}{\textbf{143610.00}} & 1132.60 & 957.97 \\
& 6465108 & 32000000 & 363413.76 & 282375.81 & 352513.17 & 273236.04 & 228685.47 & 170839.07 & 9048.14 & 8200.88 \\
& 64505110 & 320000000 & 592158.24 & 477687.12 & 589359.86 & 469671.96 & 271884.27 & 195724.96 & 139265.59 & 138741.97 \\
Decreasing
& 9602 & 19200 & 112524.48 & 86571.93 & 114034.30 & 97607.14 & 166033.34 & \color{red}{\textbf{143018.01}} & 372.60 & 56.98 \\
& 67220 & 134400 & 113499.48 & 88942.05 & 111749.48 & 99705.93 & 169566.57 & \color{red}{\textbf{137952.09}} & 297.82 & 197.89 \\
& 643400 & 1286400 & 130733.47 & 100178.96 & 122006.55 & 96143.96 & 204257.36 & \color{red}{\textbf{170412.06}} & 1309.55 & 1060.96 \\
& 6404400 & 12806400 & 256993.15 & 163550.14 & 249910.52 & 174806.12 & 236896.72 & \color{red}{\textbf{177741.05}} & 9904.83 & 9175.06 \\
& 64013600 & 128006400 & 358370.11 & 231245.99 & 381372.81 & 230093.00 & 531949.64 & 438292.98 & 87494.51 & 86869.00 \\
Alternating
& 6400 & 6400 & 103188.01 & 73747.87 & 103039.74 & 90380.19 & 155132.28 & \color{red}{\textbf{132302.05}} & 200.55 & 47.92 \\
& 64000 & 96000 & 108217.47 & 81782.10 & 104217.22 & 80809.83 & 144754.90 & \color{red}{\textbf{114871.03}} & 285.87 & 151.87 \\
& 640000 & 960000 & 128934.07 & 87894.20 & 108839.05 & 91517.93 & 189208.25 & \color{red}{\textbf{128816.84}} & 1136.50 & 955.10 \\
& 6400000 & 9600000 & 209340.04 & 124116.18 & 210117.61 & 134249.93 & 251695.84 & \color{red}{\textbf{152398.11}} & 9099.22 & 8328.91 \\
& 64000000 & 96000000 & 515877.56 & 367428.06 & 516283.90 & 398251.06 & 605431.46 & \color{red}{\textbf{453334.81}} & 81952.00 & 81115.96 \\
Two blocks
& 2 & 6400 & 102869.81 & 77906.85 & 99906.61 & 88593.96 & 17.60 & 15.97 & 365.01 & 30.99 \\
& 20 & 64000 & 105361.90 & 76925.99 & 104937.77 & 83439.11 & 17.81 & 15.97 & 56.81 & 30.04 \\
& 200 & 640000 & 107798.00 & 88968.99 & 105016.54 & 91776.85 & 17.78 & 16.93 & 146.50 & 29.80 \\
& 2000 & 6400000 & 184417.88 & 108457.09 & 179452.45 & 118996.14 & 21.39 & 16.93 & 165.49 & 36.00 \\
& 20000 & 64000000 & 435541.00 & 252403.02 & 416582.60 & 335694.07 & 38.32 & 30.04 & 83.75 & 63.90 \\
\hline
\end{tabular}
\end{tiny}
\end{table}

Our second system is a medium-large InfiniBand/Intel cluster
consisting of 2000 Dual Intel Xeon E5-2650v2 8-core processors running
at 2.6GHz, interconnected with an InfiniBand QDR-80
network\footnote{This is the so-called Vienna Scientific Cluster, see
  \url{vsc.ac.at}. We thank for access to this machine.}.  The MPI
library is \vscintelmpi and the compiler is \vscintelmpi with
optimization level -O3. The benchmark was executed with the default
environment for this machine, which pins the MPI processes to the 16
cores per node; also the choice of \mpigather and \mpigatherv
implementations was left to the environment. Results can be found in
Table~\ref{tab:intelmpi100}, Table~\ref{tab:intelmpi200} and
Table~\ref{tab:intelmpi400-default} for $p=1600$, $p=3200$, and
$p=6400$, respectively.  The most conspicuous observation about this
MPI library is the poor quality of both \mpigather and \mpigatherv,
rendering our \tuwgatherv implementation several orders of magnitude
faster for small problems. The \tuwgatherv implementation therefore
satisfies Guideline~(\ref{guide:padding}) by a large margin, and
\mpigather fails the trivial Guideline~(\ref{guide:regirreg}) compared
to \tuwgatherv by a very large factor.

\begin{table}
\caption{Results for \intelmpi, $p=400\times 16=6400$, with 1)
  binomial \mpigather (1) and linear \mpigatherv (1). Running times
  are in microseconds ($\mu s$)}
\label{tab:intelmpi400-1}
\begin{tiny}
\begin{tabular}{crrrrrrrrrr}
Problem & $m$ & $m'$ & 
\multicolumn{2}{c}{\mpigather} & 
\multicolumn{2}{c}{Guideline~(\ref{guide:padding})} & 
\multicolumn{2}{c}{\mpigatherv} &
\multicolumn{2}{c}{\tuwgatherv} \\
& & & 
\multicolumn{1}{c}{avg} & \multicolumn{1}{c}{min} & 
\multicolumn{1}{c}{avg} & \multicolumn{1}{c}{min} & 
\multicolumn{1}{c}{avg} & \multicolumn{1}{c}{min} & 
\multicolumn{1}{c}{avg} & \multicolumn{1}{c}{min} \\
\hline
Same
& 6400 & 6400 & 286.35 & 35.05 & 317.71 & 64.13 & 203847.75 & 171184.06 & 299.52 & 52.21 \\
& 64000 & 64000 & 299.46 & 182.15 & 517.45 & 246.05 & 186710.49 & 153241.87 & 200.16 & 154.02 \\
& 640000 & 640000 & 1259.86 & 1109.84 & 1247.51 & 1140.12 & 208955.90 & 174618.96 & 1137.10 & 953.91 \\
& 6400000 & 6400000 & 13873.83 & 13394.12 & 14012.66 & 13395.07 & 272070.22 & 187475.92 & 8286.86 & 8234.02 \\
& 64000000 & 64000000 & 147270.64 & 144801.14 & 147243.89 & 144757.03 & 573076.70 & 451709.99 & 82321.16 & 80964.80 \\
Random
& 9648 & 12800 & 171.47 & 50.07 & 257.15 & 80.11 & 181675.47 & 148651.84 & 365.80 & 61.04 \\
& 67611 & 128000 & 438.32 & 299.93 & 451.96 & 334.02 & 164536.88 & 143263.10 & 192.52 & 166.89 \\
& 644099 & 1280000 & 2375.90 & 2069.95 & 2211.75 & 2100.94 & 180425.41 & 140429.02 & 1009.34 & 962.97 \\
& 6348958 & 12800000 & 27995.71 & 27446.99 & 28207.40 & 27549.98 & 218233.53 & 153713.94 & 8429.70 & 8240.94 \\
& 64822758 & 127993600 & 295035.76 & 290700.91 & 294802.43 & 290601.02 & 564158.68 & 449219.94 & 86925.19 & 84053.99 \\
Spikes
& 11668 & 32000 & 284.55 & 77.96 & 214.27 & 108.00 & 158854.36 & 130697.01 & 97.69 & 62.94 \\
& 70149 & 320000 & 821.23 & 600.10 & 897.93 & 660.18 & 163111.67 & 144008.16 & 183.77 & 164.99 \\
& 627655 & 3200000 & 5825.37 & 5667.21 & 5983.96 & 5713.94 & 156893.44 & 135225.06 & 1031.90 & 931.02 \\
& 6590083 & 32000000 & 70683.11 & 69520.00 & 72239.96 & 69606.07 & 231074.29 & 170685.05 & 8772.37 & 8465.05 \\
& 63605128 & 320000000 & 720784.90 & 710308.07 & 724077.12 & 710829.02 & 237647.31 & 186805.01 & 82233.83 & 81801.89 \\
Decreasing
& 9602 & 19200 & 79.90 & 45.06 & 332.72 & 87.02 & 166301.72 & 124325.04 & 89.52 & 70.10 \\
& 67220 & 134400 & 437.02 & 314.95 & 580.22 & 342.85 & 169710.59 & 131918.91 & 315.33 & 205.04 \\
& 643400 & 1286400 & 2177.98 & 2104.04 & 2323.18 & 2146.96 & 201109.51 & 154931.78 & 1102.36 & 1046.18 \\
& 6404400 & 12806400 & 27142.07 & 26774.88 & 27319.14 & 26860.00 & 260218.10 & 183459.04 & 9517.38 & 9062.05 \\
& 64013600 & 128006400 & 291315.60 & 286717.89 & 290913.77 & 286838.05 & 531015.30 & 420577.05 & 87383.89 & 86689.95 \\
Alternating
& 6400 & 6400 & 49.76 & 37.91 & 212.76 & 64.85 & 161015.64 & 138704.06 & 76.93 & 50.07 \\
& 64000 & 96000 & 252.95 & 231.98 & 305.06 & 262.98 & 156126.28 & 125625.13 & 172.09 & 153.06 \\
& 640000 & 960000 & 1693.25 & 1624.11 & 2019.42 & 1688.00 & 184365.52 & 147700.07 & 987.06 & 964.88 \\
& 6400000 & 9600000 & 17999.47 & 17704.01 & 18028.99 & 17511.13 & 225913.18 & 155328.99 & 8507.52 & 8383.99 \\
& 64000000 & 96000000 & 217804.71 & 214792.01 & 217947.53 & 214887.14 & 626248.70 & 532054.90 & 82007.54 & 81188.92 \\
Two blocks
& 2 & 6400 & 51.97 & 30.99 & 82.22 & 61.99 & 18.67 & 15.97 & 179.34 & 40.05 \\
& 20 & 64000 & 606.29 & 180.96 & 357.92 & 218.87 & 17.78 & 16.93 & 50.11 & 30.04 \\
& 200 & 640000 & 1170.90 & 1126.05 & 1219.57 & 1170.87 & 17.78 & 15.97 & 42.36 & 29.80 \\
& 2000 & 6400000 & 10994.00 & 10734.80 & 11114.82 & 10782.96 & 20.72 & 16.93 & 153.94 & 37.91 \\
& 20000 & 64000000 & 141559.13 & 139204.03 & 142086.64 & 139410.97 & 36.68 & 25.99 & 112.17 & 83.92 \\
\hline
\end{tabular}
\end{tiny}
\end{table}

\begin{table}
\caption{Results for \intelmpi, $p=400\times 16=6400$, with topology
  aware binomial \mpigather (2) and topology aware linear \mpigatherv
  (2).  Running times are in microseconds ($\mu s$)}
\label{tab:intelmpi400-2}
\begin{tiny}
\begin{tabular}{crrrrrrrrrr}
Problem & $m$ & $m'$ & 
\multicolumn{2}{c}{\mpigather} & 
\multicolumn{2}{c}{Guideline~(\ref{guide:padding})} & 
\multicolumn{2}{c}{\mpigatherv} &
\multicolumn{2}{c}{\tuwgatherv} \\
& & & 
\multicolumn{1}{c}{avg} & \multicolumn{1}{c}{min} & 
\multicolumn{1}{c}{avg} & \multicolumn{1}{c}{min} & 
\multicolumn{1}{c}{avg} & \multicolumn{1}{c}{min} & 
\multicolumn{1}{c}{avg} & \multicolumn{1}{c}{min} \\
\hline
Same
& 6400 & 6400 & 1330.29 & 1194.00 & 1396.46 & 1277.92 & 189631.51 & 144186.02 & 66.71 & 50.07 \\
& 64000 & 64000 & 1525.18 & 1385.93 & 1580.72 & 1460.08 & 182839.87 & 170145.03 & 212.39 & 154.02 \\
& 640000 & 640000 & 2534.53 & 2250.91 & 2975.20 & 2544.88 & 181710.66 & 167731.05 & 1127.30 & 956.06 \\
& 6400000 & 6400000 & 19200.84 & 18810.03 & 37401.32 & 18851.04 & 215802.32 & 186728.00 & 8649.79 & 8425.95 \\
& 64000000 & 64000000 & 191392.45 & 190282.11 & 191377.15 & 190342.90 & 2188225.57 & 1535592.08 & 81870.86 & 81009.86 \\
Random
& 9568 & 12800 & 1347.94 & 1218.08 & 1411.15 & 1275.06 & 152857.23 & 120592.12 & 115.19 & 96.08 \\
& 67129 & 128000 & 1500.26 & 1363.99 & 1547.07 & 1474.14 & 150938.58 & 140668.15 & 306.75 & 165.94 \\
& 647126 & 1280000 & 3855.85 & 2815.01 & 3782.53 & 3623.01 & 148638.11 & 144186.02 & 1005.82 & 965.83 \\
& 6356994 & 12800000 & 52434.22 & 52086.83 & 52599.47 & 52096.84 & 212271.66 & 175087.21 & 8581.09 & 8364.92 \\
& 64199878 & 127993600 & 345662.00 & 344324.11 & 345808.56 & 344326.97 & 845814.00 & 610232.83 & 85244.80 & 84436.89 \\
Spikes
& 11336 & 32000 & 1360.41 & 1244.07 & 1507.41 & 1328.95 & 152146.13 & 139024.02 & 90.99 & 61.99 \\
& 67307 & 320000 & 1861.01 & 1482.96 & 1891.37 & 1816.99 & 147729.20 & 141247.03 & 176.17 & 161.89 \\
& 657595 & 3200000 & 10138.76 & 6142.85 & 10233.53 & 10045.05 & 148078.55 & 141868.11 & 1017.14 & 981.81 \\
& 6230155 & 32000000 & 107214.56 & 106601.00 & 107251.18 & 106725.93 & 225041.82 & 195419.07 & 8361.31 & 8239.98 \\
& 63355133 & 320000000 & 914865.86 & 745203.02 & 912313.57 & 745388.98 & 400093.68 & 373766.90 & 82541.50 & 81135.99 \\
Decreasing
& 9602 & 19200 & 1362.97 & 1241.92 & 1380.17 & 1311.06 & 158747.43 & 131989.00 & 79.90 & 56.98 \\
& 67220 & 134400 & 1504.44 & 1348.02 & 1591.09 & 1502.99 & 352687.02 & 154788.02 & 228.44 & 205.04 \\
& 643400 & 1286400 & 3759.18 & 3068.92 & 3755.07 & 3632.07 & 175958.64 & 168071.03 & 1118.99 & 1065.97 \\
& 6404400 & 12806400 & 53061.23 & 52588.94 & 53175.74 & 52703.86 & 229055.11 & 195079.80 & 9484.40 & 9130.00 \\
& 64013600 & 128006400 & 347145.11 & 345803.98 & 549690.12 & 345946.07 & 862627.21 & 505980.01 & 87318.44 & 86655.14 \\
Alternating
& 6400 & 6400 & 1302.95 & 1215.93 & 1318.34 & 1257.90 & 146806.19 & 131737.95 & 110.06 & 95.84 \\
& 64000 & 96000 & 1432.50 & 1291.99 & 1484.10 & 1376.15 & 151094.88 & 139039.99 & 175.84 & 153.06 \\
& 640000 & 960000 & 2990.64 & 1932.14 & 3073.82 & 2930.88 & 147378.10 & 142193.08 & 998.86 & 972.03 \\
& 6400000 & 9600000 & 41663.82 & 41385.89 & 41753.82 & 41460.99 & 184450.35 & 164529.09 & 8700.32 & 8590.94 \\
& 64000000 & 96000000 & 270850.54 & 269791.13 & 270583.79 & 269803.05 & 2210973.20 & 1426429.03 & 82099.65 & 81186.06 \\
Two blocks
& 2 & 6400 & 1308.91 & 1210.93 & 1332.94 & 1256.94 & 37.09 & 36.00 & 43.52 & 28.85 \\
& 20 & 64000 & 1391.42 & 1288.89 & 1475.43 & 1389.98 & 36.52 & 35.05 & 102.57 & 45.06 \\
& 200 & 640000 & 2377.34 & 1743.79 & 2432.59 & 2389.91 & 36.95 & 36.00 & 57.69 & 31.95 \\
& 2000 & 6400000 & 18490.39 & 17412.19 & 18543.29 & 18399.95 & 38.34 & 36.00 & 51.77 & 36.95 \\
& 20000 & 64000000 & 191868.46 & 191071.99 & 191871.42 & 190980.91 & 53.24 & 41.96 & 181.25 & 63.18 \\
\hline
\end{tabular}
\end{tiny}
\end{table}

\begin{table}
\caption{Results for \intelmpi, $p=400\times 16=6400$, with Shumilin's
  algorithm for \mpigather (3) and $k$-nomial \mpigatherv (3). Running
  times are in microseconds ($\mu s$)}
\label{tab:intelmpi400-3}
\begin{tiny}
\begin{tabular}{crrrrrrrrrr}
Problem & $m$ & $m'$ & 
\multicolumn{2}{c}{\mpigather} & 
\multicolumn{2}{c}{Guideline~(\ref{guide:padding})} & 
\multicolumn{2}{c}{\mpigatherv} &
\multicolumn{2}{c}{\tuwgatherv} \\
& & & 
\multicolumn{1}{c}{avg} & \multicolumn{1}{c}{min} & 
\multicolumn{1}{c}{avg} & \multicolumn{1}{c}{min} & 
\multicolumn{1}{c}{avg} & \multicolumn{1}{c}{min} & 
\multicolumn{1}{c}{avg} & \multicolumn{1}{c}{min} \\
\hline
Same
& 6400 & 6400 & 133951.06 & 100253.11 & 129760.14 & 115872.86 & 146.20 & 118.02 & 85.65 & 50.07 \\
& 64000 & 64000 & 138576.78 & 104353.90 & 135104.30 & 119436.03 & 299.16 & 273.94 & 169.00 & 153.06 \\
& 640000 & 640000 & 145147.18 & 110294.82 & 135611.39 & 121351.00 & 2015.99 & 1956.94 & 983.86 & 960.11 \\
& 6400000 & 6400000 & 207961.37 & 134063.01 & 218410.02 & 149597.17 & 19388.81 & 18954.99 & 8535.40 & 8424.04 \\
& 64000000 & 64000000 & 523186.45 & 379465.82 & 522073.89 & 421118.97 & 193644.12 & 191694.02 & 81971.96 & 81095.93 \\
Random
& 9589 & 12800 & 127994.05 & 88747.02 & 118543.32 & 93716.86 & 171.78 & 159.03 & 93.98 & 65.09 \\
& 67181 & 128000 & 113890.80 & 90431.93 & 117950.88 & 101103.07 & 395.44 & 337.12 & 191.22 & 170.95 \\
& 644560 & 1280000 & 122207.36 & 97154.14 & 111394.18 & 95774.17 & 2036.46 & 1990.08 & 1017.37 & 961.78 \\
& 6396993 & 12800000 & 238959.09 & 158728.12 & 233639.79 & 165606.02 & 19961.73 & 19591.09 & 8562.66 & 8450.98 \\
& 64626032 & 127987200 & 374623.55 & 231142.04 & 387951.31 & 232944.97 & 198733.35 & 196699.14 & 84133.59 & 83119.87 \\
Spikes
& 11408 & 32000 & 100735.14 & 78346.01 & 106124.63 & 81103.09 & 162.15 & 133.04 & 101.20 & 72.00 \\
& 67356 & 320000 & 110304.74 & 86798.91 & 106089.93 & 88222.03 & 381.43 & 302.08 & 182.23 & 160.93 \\
& 644621 & 3200000 & 172453.25 & 92860.94 & 157690.17 & 100098.13 & 2064.85 & 2034.90 & 1000.13 & 915.05 \\
& 6250151 & 32000000 & 358345.68 & 231605.05 & 334471.33 & 269568.92 & 20139.45 & 19890.07 & 8674.33 & 8198.02 \\
& 64605108 & 320000000 & 587072.46 & 469606.16 & 581391.97 & 464993.95 & 197733.14 & 195786.95 & 84086.85 & 83271.03 \\
Decreasing
& 9602 & 19200 & 112805.33 & 79298.97 & 106625.11 & 89558.12 & 445.74 & 120.88 & 489.41 & 56.98 \\
& 67220 & 134400 & 102060.34 & 82747.94 & 105274.30 & 89476.11 & 678.47 & 325.92 & 552.13 & 205.99 \\
& 643400 & 1286400 & 117084.49 & 89455.84 & 107227.53 & 88491.92 & 2463.06 & 2162.93 & 1471.62 & 1075.98 \\
& 6404400 & 12806400 & 236196.07 & 134517.19 & 236808.20 & 162189.96 & 21575.03 & 21152.97 & 9372.68 & 9088.99 \\
& 64013600 & 128006400 & 381298.96 & 237378.12 & 380227.12 & 234842.06 & 207016.92 & 204067.95 & 88299.37 & 86812.02 \\
Alternating
& 6400 & 6400 & 109061.30 & 78548.91 & 102702.71 & 90834.14 & 946.29 & 126.12 & 383.27 & 52.93 \\
& 64000 & 96000 & 110752.97 & 75751.07 & 106155.93 & 86133.96 & 321.59 & 282.05 & 659.15 & 151.87 \\
& 640000 & 960000 & 125573.36 & 86350.92 & 110846.78 & 97111.94 & 2102.71 & 1987.22 & 1544.37 & 958.92 \\
& 6400000 & 9600000 & 197330.75 & 108726.02 & 187050.61 & 139098.88 & 19757.82 & 19280.91 & 8888.04 & 8229.97 \\
& 64000000 & 96000000 & 570109.35 & 357120.99 & 625595.87 & 395962.00 & 193499.81 & 191794.16 & 82379.72 & 81126.93 \\
Two blocks
& 2 & 6400 & 106856.36 & 88028.91 & 108394.58 & 89107.04 & 315.90 & 93.94 & 463.92 & 30.04 \\
& 20 & 64000 & 111063.87 & 79486.85 & 105336.73 & 80888.99 & 410.45 & 92.03 & 679.39 & 30.04 \\
& 200 & 640000 & 116336.94 & 86739.06 & 105976.77 & 90575.22 & 425.60 & 94.89 & 369.27 & 36.00 \\
& 2000 & 6400000 & 189806.36 & 105443.00 & 158316.22 & 118196.01 & 416.14 & 103.00 & 356.42 & 35.05 \\
& 20000 & 64000000 & 480931.78 & 349302.05 & 461774.30 & 348691.94 & 670.28 & 261.78 & 702.30 & 66.04 \\
\hline
\end{tabular}
\end{tiny}
\end{table}

\begin{table}
\caption{Results for \intelmpi, $p=400\times 16=6400$, with binomial
  with segmentation \mpigather (4) and $k$-nomial \mpigatherv
  (3). Running times are in microseconds ($\mu s$)}
\label{tab:intelmpi400-4}
\begin{tiny}
\begin{tabular}{crrrrrrrrrr}
Problem & $m$ & $m'$ & 
\multicolumn{2}{c}{\mpigather} & 
\multicolumn{2}{c}{Guideline~(\ref{guide:padding})} & 
\multicolumn{2}{c}{\mpigatherv} &
\multicolumn{2}{c}{\tuwgatherv} \\
& & & 
\multicolumn{1}{c}{avg} & \multicolumn{1}{c}{min} & 
\multicolumn{1}{c}{avg} & \multicolumn{1}{c}{min} & 
\multicolumn{1}{c}{avg} & \multicolumn{1}{c}{min} & 
\multicolumn{1}{c}{avg} & \multicolumn{1}{c}{min} \\
\hline
Same
& 6400 & 6400 & 774.14 & 39.10 & 1182.02 & 65.09 & 995.37 & 123.98 & 680.45 & 53.17 \\
& 64000 & 64000 & 563.37 & 184.06 & 559.83 & 210.05 & 902.68 & 259.16 & 819.50 & 157.83 \\
& 640000 & 640000 & 1594.75 & 1112.94 & 1702.48 & 1157.05 & 2281.50 & 1932.86 & 1619.93 & 962.02 \\
& 6400000 & 6400000 & 14464.53 & 13731.96 & 14634.84 & 13547.18 & 19681.53 & 19338.13 & 9631.22 & 8454.08 \\
& 64000000 & 64000000 & 145706.98 & 143640.04 & 146483.18 & 144027.95 & 193182.23 & 191246.99 & 83206.43 & 81207.99 \\
Random
& 9618 & 12800 & 891.74 & 46.01 & 428.57 & 80.11 & 754.59 & 131.13 & 750.79 & 61.04 \\
& 67038 & 128000 & 669.99 & 291.11 & 570.12 & 329.97 & 1150.92 & 330.92 & 463.82 & 166.18 \\
& 642773 & 1280000 & 2374.02 & 2070.90 & 2569.19 & 2115.01 & 2226.74 & 1948.12 & 1330.13 & 972.99 \\
& 6390881 & 12800000 & 24897.40 & 24209.02 & 25154.10 & 24286.03 & 20149.21 & 19526.00 & 10030.93 & 9416.10 \\
& 63795923 & 127974400 & 275344.53 & 270759.82 & 275305.34 & 271631.00 & 194914.27 & 193742.99 & 87632.58 & 86949.83 \\
Spikes
& 11320 & 32000 & 589.96 & 69.86 & 686.96 & 108.00 & 448.54 & 144.96 & 909.52 & 62.94 \\
& 69316 & 320000 & 1136.18 & 616.07 & 1269.21 & 649.21 & 626.34 & 305.18 & 580.06 & 161.89 \\
& 648613 & 3200000 & 6597.36 & 5619.05 & 6944.84 & 5648.85 & 2524.01 & 1981.02 & 1662.80 & 938.89 \\
& 6420117 & 32000000 & 77553.25 & 76498.03 & 77929.93 & 76401.95 & 20666.20 & 20349.98 & 9497.62 & 8433.82 \\
& 65005100 & 320000000 & 671322.28 & 661470.89 & 672777.57 & 661302.09 & 195790.83 & 193390.13 & 86017.73 & 84006.07 \\
Decreasing
& 9602 & 19200 & 796.97 & 58.89 & 774.79 & 82.02 & 1077.95 & 123.02 & 290.08 & 55.07 \\
& 67220 & 134400 & 622.58 & 311.14 & 1018.03 & 348.09 & 899.55 & 332.83 & 713.06 & 201.94 \\
& 643400 & 1286400 & 2773.77 & 2087.12 & 2877.29 & 2130.03 & 2821.24 & 2244.00 & 1633.08 & 1086.00 \\
& 6404400 & 12806400 & 24904.33 & 23913.86 & 25353.75 & 23975.85 & 22353.53 & 21329.88 & 9713.58 & 9071.11 \\
& 64013600 & 128006400 & 274205.81 & 269502.88 & 275449.98 & 270691.16 & 205636.62 & 202884.91 & 87562.04 & 86633.92 \\
Alternating
& 6400 & 6400 & 895.98 & 44.82 & 962.86 & 64.13 & 341.39 & 118.97 & 806.74 & 51.02 \\
& 64000 & 96000 & 950.16 & 239.13 & 751.48 & 266.08 & 889.07 & 271.08 & 784.63 & 154.97 \\
& 640000 & 960000 & 1894.59 & 1615.05 & 2300.61 & 1645.80 & 2490.22 & 1931.91 & 1352.45 & 967.03 \\
& 6400000 & 9600000 & 18867.64 & 17627.00 & 19090.05 & 17637.97 & 20032.64 & 19195.08 & 8487.01 & 8227.83 \\
& 64000000 & 96000000 & 209066.01 & 206285.00 & 209464.47 & 206187.01 & 193477.21 & 191688.06 & 82237.05 & 80953.12 \\
Two blocks
& 2 & 6400 & 456.90 & 36.00 & 1026.37 & 62.94 & 627.13 & 93.94 & 673.90 & 29.80 \\
& 20 & 64000 & 988.18 & 190.02 & 665.18 & 215.05 & 532.54 & 103.95 & 592.47 & 39.82 \\
& 200 & 640000 & 1600.52 & 1129.15 & 1671.33 & 1173.02 & 430.17 & 101.09 & 471.95 & 29.80 \\
& 2000 & 6400000 & 11553.79 & 10770.08 & 12114.46 & 10823.97 & 433.90 & 107.05 & 581.97 & 38.86 \\
& 20000 & 64000000 & 143116.56 & 141374.83 & 144352.72 & 141484.98 & 607.52 & 234.13 & 581.15 & 63.90 \\
\hline
\end{tabular}
\end{tiny}
\end{table}

The reason for the poor performance on the larger cluster may be that
the \mpigather and \mpigatherv implementations used by default are ill
chosen.  The \vscintelmpi library indeed contains different algorithms
and implementations that can be chosen by environment variables
(\intelmpiadjust); it can also be controlled for which message and
process ranges particular implementations shall be used. For
\mpigather, the library lists four different algorithms (no specific
references are given), namely 1) binomial, 2) topology aware binomial,
3) a socalled Shumilin's algorithm, and 4) binomial with
segmentation. For \mpigatherv, three choices are possible, namely 1)
linear, 2) topology aware linear, and 3) $k$-nomial (with radix
$k=2$). For completeness, we ran the benchmark with all these explicit
choices as well, each for the full problem size range. The results are
given in Tables~\ref{tab:intelmpi400-1}
to~\ref{tab:intelmpi400-4}. Performance guideline violations are too
numerous and not shown in these tables, and for most of these
implementation choices, performance is excessively poor compared to
the \tuwgatherv implementation. The best, and most promising choices
are 1) binomial for \mpigather, and 3) $k$-nomial for
\mpigatherv. These results are shown together in
Table~\ref{tab:intelmpi400-best}, and performance guidelines
violations are marked in red as in the previous tables. The results
show that while a binomial tree algorithm can work well also for
\mpigatherv for almost regular problems, there is a penalty as the
problems get more irregular, which the \tuwgatherv implementation does
not have to pay, still being often faster than the native \mpigather
implementation. The \tuwgatherv implementation in all cases
outperforms $k$-nomial \mpigatherv for the \vscintelmpi library by a
factor of two to three (noteworthy also in the ``Two blocks''
distribution where a linear algorithm performs best).  The violations
of performance Guideline~(\ref{guide:padding}) by \mpigatherv are
surprising, and likely due to overhead in determining block sizes to
be used in the $k$-nomial tree. The \tuwgatherv implementation has no
guideline violations, even when compared to the good \vscintelmpi
library implementations.

\begin{table}
\caption{Results for \intelmpi, $p=400\times 16=6400$, overall best
  settings with binomial (1) for \mpigather and $k$-nomial (3) for
  \mpigatherv.  Running times are in microseconds ($\mu s$). Numbers
  in red, bold font show violations of the performance guidelines,
  Guideline~(\ref{guide:regirreg}) in the \mpigather column,
  Guideline~(\ref{guide:padding}) in column \mpigatherv}
\label{tab:intelmpi400-best}
\begin{tiny}
\begin{tabular}{crrrrrrrrrr}
Problem & $m$ & $m'$ & 
\multicolumn{2}{c}{\mpigather} & 
\multicolumn{2}{c}{Guideline~(\ref{guide:padding})} & 
\multicolumn{2}{c}{\mpigatherv} &
\multicolumn{2}{c}{\tuwgatherv} \\
& & & 
\multicolumn{1}{c}{avg} & \multicolumn{1}{c}{min} & 
\multicolumn{1}{c}{avg} & \multicolumn{1}{c}{min} & 
\multicolumn{1}{c}{avg} & \multicolumn{1}{c}{min} & 
\multicolumn{1}{c}{avg} & \multicolumn{1}{c}{min} \\
\hline
Same
& 6400 & 6400 & 286.35 & 35.05 & 317.71 & 64.13 & 995.37 & \color{red}{\textbf{123.98}} & 680.45 & 53.17 \\
& 64000 & 64000 & 299.46 & 182.15 & 517.45 & 246.05 & 902.68 & \color{red}{\textbf{259.16}} & 819.50 & 157.83 \\
& 640000 & 640000 & 1259.86 & 1109.84 & 1247.51 & 1140.12 & 2281.50 & \color{red}{\textbf{1932.86}} & 1619.93 & 962.02 \\
& 6400000 & 6400000 & 13873.83 & 13394.12 & 14012.66 & 13395.07 & 19681.53 & \color{red}{\textbf{19338.13}} & 9631.22 & 8454.08 \\
& 64000000 & 64000000 & 147270.64 & 144801.14 & 147243.89 & 144757.03 & 193182.23 & 191246.99 & 83206.43 & 81207.99 \\
Random
& 9648 & 12800 & 171.47 & 50.07 & 257.15 & 80.11 & 754.59 & \color{red}{\textbf{131.13}} & 750.79 & 61.04 \\
& 67611 & 128000 & 438.32 & 299.93 & 451.96 & 334.02 & 1150.92 & 330.92 & 463.82 & 166.18 \\
& 644099 & 1280000 & 2375.90 & 2069.95 & 2211.75 & 2100.94 & 2226.74 & 1948.12 & 1330.13 & 972.99 \\
& 6348958 & 12800000 & 27995.71 & 27446.99 & 28207.40 & 27549.98 & 20149.21 & 19526.00 & 10030.93 & 9416.10 \\
& 64822758 & 127993600 & 295035.76 & 290700.91 & 294802.43 & 290601.02 & 194914.27 & 193742.99 & 87632.58 & 86949.83 \\
Spikes
& 11668 & 32000 & 284.55 & 77.96 & 214.27 & 108.00 & 448.54 & \color{red}{\textbf{144.96}} & 909.52 & 62.94 \\
& 70149 & 320000 & 821.23 & 600.10 & 897.93 & 660.18 & 626.34 & 305.18 & 580.06 & 161.89 \\
& 627655 & 3200000 & 5825.37 & 5667.21 & 5983.96 & 5713.94 & 2524.01 & 1981.02 & 1662.80 & 938.89 \\
& 6590083 & 32000000 & 70683.11 & 69520.00 & 72239.96 & 69606.07 & 20666.20 & 20349.98 & 9497.62 & 8433.82 \\
& 63605128 & 320000000 & 720784.90 & 710308.07 & 724077.12 & 710829.02 & 195790.83 & 193390.13 & 86017.73 & 84006.07 \\
Decreasing
& 9602 & 19200 & 79.90 & 45.06 & 332.72 & 87.02 & 1077.95 & \color{red}{\textbf{123.02}} & 290.08 & 55.07 \\
& 67220 & 134400 & 437.02 & 314.95 & 580.22 & 342.85 & 899.55 & 332.83 & 713.06 & 201.94 \\
& 643400 & 1286400 & 2177.98 & 2104.04 & 2323.18 & 2146.96 & 2821.24 & \color{red}{\textbf{2244.00}} & 1633.08 & 1086.00 \\
& 6404400 & 12806400 & 27142.07 & 26774.88 & 27319.14 & 26860.00 & 22353.53 & 21329.88 & 9713.58 & 9071.11 \\
& 64013600 & 128006400 & 291315.60 & 286717.89 & 290913.77 & 286838.05 & 205636.62 & 202884.91 & 87562.04 & 86633.92 \\
Alternating
& 6400 & 6400 & 49.76 & 37.91 & 212.76 & 64.85 & 341.39 & \color{red}{\textbf{118.97}} & 806.74 & 51.02 \\
& 64000 & 96000 & 252.95 & 231.98 & 305.06 & 262.98 & 889.07 & \color{red}{\textbf{271.08}} & 784.63 & 154.97 \\
& 640000 & 960000 & 1693.25 & 1624.11 & 2019.42 & 1688.00 & 2490.22 & \color{red}{\textbf{1931.91}} & 1352.45 & 967.03 \\
& 6400000 & 9600000 & 17999.47 & 17704.01 & 18028.99 & 17511.13 & 20032.64 & \color{red}{\textbf{19195.08}} & 8487.01 & 8227.83 \\
& 64000000 & 96000000 & 217804.71 & 214792.01 & 217947.53 & 214887.14 & 193477.21 & 191688.06 & 82237.05 & 80953.12 \\
Two blocks
& 2 & 6400 & 51.97 & 30.99 & 82.22 & 61.99 & 627.13 & \color{red}{\textbf{93.94}} & 673.90 & 29.80 \\
& 20 & 64000 & 606.29 & 180.96 & 357.92 & 218.87 & 532.54 & 103.95 & 592.47 & 39.82 \\
& 200 & 640000 & 1170.90 & 1126.05 & 1219.57 & 1170.87 & 430.17 & 101.09 & 471.95 & 29.80 \\
& 2000 & 6400000 & 10994.00 & 10734.80 & 11114.82 & 10782.96 & 433.90 & 107.05 & 581.97 & 38.86 \\
& 20000 & 64000000 & 141559.13 & 139204.03 & 142086.64 & 139410.97 & 607.52 & 234.13 & 581.15 & 63.90 \\
\hline
\end{tabular}
\end{tiny}
\end{table}

\section{Conclusion}

This paper described new, simple algorithms for performing irregular
gather and scatter operations as found in MPI in linear communication
time, a considerable improvement over both fixed, data oblivious
logarithmic depth trees and direct communication with the root.  An
experimental evaluation shows that the resulting implementation can,
especially for overall small problem instances be considerably faster
than current MPI library \mpigatherv implementations by large factors.
Our prototype implementations can readily be incorporated into
existing MPI libraries.

The tree construction technique of Lemma~\ref{lem:construction} can be
applied to other problems as well, for instance to construct good,
problem dependent trees for sparse reduction
operations~\cite{Traff10:neutralsparse}.

\bibliographystyle{plain}
\bibliography{parallel,traff}

\begin{thebibliography}{10}

\bibitem{Ben-MiledFortesEigenmannTaylor98}
Zina Ben{-}Miled, Jos{\'{e}} A.~B. Fortes, Rudolf Eigenmann, and Valerie~E.
  Taylor.
\newblock On the implementation of broadcast, scatter and gather in a
  heterogeneous architecture.
\newblock In {\em Thirty-First Annual Hawaii International Conference on System
  Sciences {(HICSS)}}, pages 216--225, 1998.

\bibitem{BhattPucciRanadeRosenberg93}
Sandeep~N. Bhatt, Geppino Pucci, Abhiram Ranade, and Arnold~L. Rosenberg.
\newblock Scattering and gathering messages in networks of processors.
\newblock {\em {IEEE} Transactions on Computers}, 42(8):938--949, 1993.

\bibitem{BoxerMiller04}
Laurence Boxer and Russ Miller.
\newblock Coarse grained gather and scatter operations with applications.
\newblock {\em {J}ournal of {P}arallel and {D}istributed {C}omputing},
  64(11):1297--1310, 2004.

\bibitem{Traff17:expected}
Alexandra Carpen-Amarie, Sascha Hunold, and Jesper~Larsson Tr{\"a}ff.
\newblock On expected and observed communication performance with {MPI} derived
  datatypes.
\newblock Submitted, 2017.

\bibitem{ChanHeimlichPurkayasthavandeGeijn07}
Ernie Chan, Marcel Heimlich, Avi Purkayastha, and Robert~A. van~de Geijn.
\newblock Collective communication: theory, practice, and experience.
\newblock {\em Concurrency and Computation: Practice and Experience},
  19(13):1749--1783, 2007.

\bibitem{CharlesFraigniaud93}
Henri-Pierre Charles and Pierre Fraigniaud.
\newblock Scheduling a scattering-gathering sequence on hypercubes.
\newblock {\em {P}arallel {P}rocessing {L}etters}, 3:29--42, 1993.

\bibitem{DichevRychovLastovetsky10}
Kiril Dichev, Vladimir Rychkov, and Alexey~L. Lastovetsky.
\newblock Two algorithms of irregular scatter/gather operations for
  heterogeneous platforms.
\newblock In {\em Recent Advances in the Message Passing Interface; 17th
  European {MPI} Users' Group Meeting ({EuroMPI})}, pages 289--293, 2010.

\bibitem{HattaShibusa00}
Jun{-}ichi Hatta and Susumu Shibusawa.
\newblock Scheduling algorithms for efficient gather operations in distributed
  heterogeneous systems.
\newblock In {\em Proceedings of the 2000 International Workshop on Parallel
  Processing {(ICPPW)}}, pages 173--180, 2000.

\bibitem{Traff16:autoguide}
Sascha Hunold, Alexandra Carpen-Amarie, Felix~Donatus L{\"u}bbe, and
  Jesper~Larsson Tr{\"a}ff.
\newblock Automatic verification of self-consistent {MPI} performance
  guidelines.
\newblock In {\em Euro-Par Parallel Processing}, volume 9833 of {\em Lecture
  Notes in Computer Science}, pages 433--446, 2016.

\bibitem{MPI-3.0}
{MPI Forum}.
\newblock {\em \textsf{MPI}: A Message-Passing Interface Standard. Version
  3.0}, September 21st 2012.
\newblock \url{www.mpi-forum.org}.

\bibitem{SaadSchultz89}
Youcef Saad and Martin~H. Schultz.
\newblock Data communication in parallel architectures.
\newblock {\em {P}arallel {C}omputing}, 11(2):131--150, 1989.

\bibitem{ShibusawaMakinoNimiyaHatta00}
Susumu Shibusawa, Hiroyuki Makino, Shigeki Nimiya, and Jun{-}ichi Hatta.
\newblock Scatter and gather operations on an asynchronous communication model.
\newblock In {\em Proceedings of the 2000 {ACM} Symposium on Applied Computing
  ({SAC})}, pages 685--691, 2000.

\bibitem{Traff04:gatscat}
Jesper~Larsson Tr{\"{a}}ff.
\newblock Hierarchical gather/scatter algorithms with graceful degradation.
\newblock In {\em 18th International Parallel and Distributed Processing
  Symposium {(IPDPS)}}, page~80, 2004.

\bibitem{Traff10:neutralsparse}
Jesper~Larsson Tr{\"a}ff.
\newblock Transparent neutral element elimination in {MPI} reduction
  operations.
\newblock In {\em Recent Advances in Message Passing Interface. 17th European
  {MPI} Users' Group Meeting}, volume 6305 of {\em Lecture Notes in Computer
  Science}, pages 275--284. Springer, 2010.

\bibitem{Traff16:typeprog}
Jesper~Larsson Tr{\"a}ff.
\newblock A library for advanced datatype programming.
\newblock In {\em 23rd European MPI Users' Group Meeting ({EuroMPI})}, pages
  98--107. ACM, 2016.

\bibitem{Traff10:selfcons}
Jesper~Larsson Tr{\"{a}}ff, William~D. Gropp, and Rajeev Thakur.
\newblock Self-consistent {MPI} performance guidelines.
\newblock {\em {IEEE} Transactions on Parallel and Distributed Systems},
  21(5):698--709, 2010.

\end{thebibliography}

\end{document}